\newtheorem{theorem}{Theorem}[section]
\newtheorem{lemma}[theorem]{Lemma}
\newtheorem{corollary}[theorem]{Corollary}
\newtheorem{proposition}[theorem]{Proposition}
\newtheorem{observation}[theorem]{Observation}
\theoremstyle{definition}
\newtheorem{definition}[theorem]{Definition}
\newcommand{\ignore}[1]{}
\title{Cluster Deletion on Interval Graphs and Split Related Graphs}
\author{
Athanasios L. Konstantinidis\thanks{Department of Mathematics, University of Ioannina, Greece. E-mail: \texttt{skonstan@cc.uoi.gr}.
This research has been financially supported by the State Scholarships Foundation (IKY).}
\and
Charis Papadopoulos\thanks{Department of Mathematics, University of Ioannina, Greece. E-mail:  \texttt{charis@cs.uoi.gr}}
}
\date{}
\begin{document}

\maketitle

\begin{abstract}
In the {\sc Cluster Deletion} problem the goal is to remove the minimum number of edges of a given graph, such that every connected component of the resulting graph constitutes a clique.
It is known that the decision version of {\sc Cluster Deletion} is NP-complete on ($P_5$-free) chordal graphs, whereas {\sc Cluster Deletion} is solved in polynomial time on split graphs.
However, the existence of a polynomial-time algorithm of {\sc Cluster Deletion} on interval graphs, a proper subclass of chordal graphs, remained a well-known open problem.
Our main contribution is that we settle this problem in the affirmative, by providing a polynomial-time algorithm for {\sc Cluster Deletion} on interval graphs.
Moreover, despite the simple formulation of the algorithm on split graphs, we show that {\sc Cluster Deletion} remains NP-complete on a natural and slight generalization of split graphs that constitutes a proper subclass of $P_5$-free chordal graphs.
To complement our results, we provide two polynomial-time algorithms for {\sc Cluster Deletion} on subclasses of such generalizations of split graphs.
\end{abstract}

\section{Introduction}
In graph theoretic notions, {\it clustering} is the task of partitioning the vertices of the graph into subsets, called clusters, in such a way that there should be
many edges within each cluster and relatively few edges between the clusters.
In many applications, the clusters are restricted to induce cliques, as the represented data of each edge corresponds to a similarity value between two objects \cite{HJ97,Hartigan}.
Under the term cluster graph. which refers to a disjoint union of cliques, one may find a variety of applications that have been extensively studied \cite{BBC04,CGW03,Schaeffer07}.
Here we consider the {\sc Cluster Deletion} problem which asks for a minimum number of edge deletions from an input graph, so that the resulting graph is a disjoint union of cliques.
In the decision version of the problem, we are also given an integer $k$ and we want to decide whether at most $k$ edge deletions are enough to produce a cluster graph.

Although {\sc Cluster Deletion} is NP-hard on general graphs \cite{CluNP04}, settling its complexity status restricted on graph classes has attracted several researchers.
Regarding the maximum degree of a graph, Komusiewicz and Uhlmann \cite{CD-lowdegree} have shown an interesting complexity dichotomy result:
{\sc Cluster Deletion} remains NP-hard on $C_4$-free graphs with maximum degree four, whereas it can be solved in polynomial time for graphs having maximum degree at most three.
Quite recently, Golovach et al. have shown that it remains NP-hard on planar graphs \cite{GolovachHKLP18}.
For graph classes characterized by forbidden induced subgraphs, Gao et al. \cite{CD-cographs} showed that {\sc Cluster Deletion} is NP-hard on $(C_5, P_5, \textrm{bull}, \textrm{fork}, \textrm{co-gem}, \textrm{4-pan}, \textrm{co-4-pan})$-free graphs and on $(2K_2,3K_1)$-free graphs.
Regarding $H$-free graphs, Gr{\"{u}}ttemeier et al. \cite{GK18}, showed a complexity dichotomy result for any graph $H$ consisting of at most four vertices.
In particular, for any graph $H$ on four vertices such that $H \notin \{P_4, \textrm{paw}\}$, {\sc Cluster Deletion} is NP-hard on $H$-free graphs,
whereas it can be solved in polynomial time on $P_4$- or $\textrm{paw}$-free graphs \cite{GK18}.
Interestingly, {\sc Cluster Deletion} remains NP-hard on $P_5$-free chordal graphs \cite{BDM15}.

On the positive side, {\sc Cluster Deletion} have been shown to be solved in polynomial time on cographs \cite{CD-cographs},
proper interval graphs \cite{BDM15},
split graphs \cite{BDM15},
and $P_4$-reducible graphs \cite{BonomoDNV15}.
More precisely, iteratively picking maximum cliques defines a clustering on the graph which actually gives an optimal solution on cographs (i.e., $P_4$-free graphs),
as shown by Gao et al. in \cite{CD-cographs}.
In fact, the greedy approach of selecting a maximum clique provides a $2$-approximation algorithm, though not necessarily in polynomial-time \cite{DessmarkJLLP07}.
As the problem is already NP-hard on chordal graphs \cite{BDM15}, it is natural to consider subclasses of chordal graphs such as interval graphs and split graphs.
Although for split graphs there is a simple polynomial-time algorithm, restricted to interval graphs only the complexity of a proper subclass, proper interval graphs, was determined by giving a solution that runs in polynomial-time \cite{BDM15}.
Settling the complexity of {\sc Cluster Deletion} on interval graphs, was left open \cite{BDM15,BonomoDNV15,CD-cographs}.

For proper interval graphs, Bonomo et al. \cite{BDM15} characterized their optimal solution by consecutiveness of each cluster
with respect to their natural ordering of the vertices.
Based on this fact, a dynamic programming approach led to a polynomial-time algorithm.
It is not difficult to see that such a consecutiveness does not hold on interval graphs, as potential clusters might require to break in the corresponding vertex ordering.
Here we characterize an optimal solution of interval graphs whenever a cluster is required to break.
In particular, we take advantage of their consecutive arrangement of maximal cliques and describe subproblems of maximal cliques containing the last vertex.
One of our key observations is that the candidate clusters containing the last vertex can be enumerated in polynomial time given two vertex orderings of the graph. 
We further show that each such candidate cluster separates the graph in a recursive way with respect to optimal subsolutions, 
that enables to define our dynamic programming table to keep track about partial solutions.
Thus, our algorithm for interval graphs suggests to consider a particular consecutiveness of a solution and apply a dynamic programming approach defined by two vertex orderings. 

Furthermore, we complement the previously-known NP-harness of {\sc Cluster Deletion} on $P_5$-free chordal graphs, by providing a proper subclass of such graphs for which we prove that the problem remains NP-hard.
This result is inspired and motivated by the very simple characterization of an optimal solution on split graphs: 
either a maximal clique constitutes the only non-edgeless cluster, or there are exactly two non-edgeless clusters whenever there is a vertex of the independent set that is adjacent
to all the vertices of the clique except one \cite{BDM15}. 
Due to the fact that true twins belong to the same cluster in an optimal solution, it is natural to consider true twins at the independent set, 
as they are expected not to influence the solution characterization.  
Surprisingly, we show that {\sc Cluster Deletion} remains NP-complete even on such a slight generalization of split graphs. 
We then study two different classes of such generalization of split graphs that can be viewed as the parallel of split graphs that admit disjoint clique-neighborhood and nested clique-neighborhood.
For {\sc Cluster Deletion} we provide polynomial-time algorithms on both classes of graphs.


\section{Preliminaries}
All graphs considered here are simple and undirected.
A graph is denoted by $G=(V,E)$ with vertex set $V$ and edge set $E$. We use
the convention that $n=|V|$ and $m=|E|$.
The {\it neighborhood} of
a vertex~$v$ of $G$ is $N(v)=\{x \mid vx \in E\}$ and the
{\it closed neighborhood} of $v$ is $N[v] = N(v) \cup \{v\}$.
For $S \subseteq V$, $N(S)=\bigcup_{v \in S} N(v) \setminus S$ and $N[S] = N(S) \cup S$.
A graph~$H$ is a {\it subgraph} of $G$ if $V(H)\subseteq V(G)$
and $E(H)\subseteq E(G)$. For $X\subseteq V(G)$, the subgraph
of $G$ {\it induced} by $X$, $G[X]$, has vertex set~$X$, and
for each vertex pair~$u, v$ from $X$, $uv$ is an edge of $G[X]$
if and only if $u\not= v$ and $uv$ is an edge of $G$.
For $R\subseteq E(G)$, $G\setminus R$ denotes the
graph~$(V(G), E(G)\setminus R)$, that is a subgraph of $G$ and
for $S \subseteq V(G)$, $G - S$ denotes the
graph~$G[V(G)-S]$, that is an induced subgraph of $G$.
For two set of vertices $A$ and $B$, we write $E(A,B)$ to denote the edges that have one endpoint in $A$ and one endpoint in $B$.
Two adjacent vertices $u$ and $v$ are called {\it true twins} if $N[u] = N[v]$,
whereas two non-adjacent vertices $x$ and $y$ are called {\it false twins} if $N(u)=N(v)$.

A {\it clique} of $G$ is a set of pairwise adjacent vertices
of $G$, and a {\it maximal clique} of $G$ is a clique of $G$
that is not properly contained in any clique of $G$. An
{\it independent set} of $G$ is a set of pairwise non-adjacent
vertices of $G$.
For $k\geq 2$, the chordless path on $k$ vertices is denoted by $P_k$ and the
chordless cycle on $k$ vertices is denoted by $C_k$.
For an induced path $P_k$, the vertices of degree one are called endvertices.
A vertex $v$ is {\it universal} in $G$ if $N[v] = V(G)$ and $v$ is {\it isolated} if $N(v) = \emptyset$.
A graph is {\it connected} if there is a path between any
pair of vertices.
A {\it connected component} of $G$ is a maximal connected subgraph of $G$.
For a set of finite graphs $\mathcal{H}$, we say that a graph $G$ is $\mathcal{H}$-free if $G$ does not contain an induced subgraph isomorphic to any of the graphs of $\mathcal{H}$.

\medskip
The problem of {\sc Cluster Deletion} is formally defined as follows: given a graph $G=(V,E)$, the goal is to compute the minimum set $F \subseteq E(G)$ of edges
such that every connected component of $G - F$ is a clique.
A {\it cluster graph} is a $P_3$-free graph, or equivalently, any of its connected components is a clique.
Thus, the task of {\sc Cluster Deletion} is to turn the input graph $G$ into a cluster graph by deleting the minimum number of edges.
Let $S = C_1, \ldots, C_k$ be a solution of {\sc Cluster Deletion} such that $G[C_i]$ is a clique.
In such terms, the problem can be viewed as a vertex partition problem into $C_1, \ldots, C_k$.
Each $C_i$ is simple called {\it cluster}.
Edgeless clusters, i.e., clusters containing exactly one vertex, are called {\it trivial clusters}.
The edges of $G$ are partitioned into {\it internal} and {\it external} edges: an internal edge $uv$ has both its endpoints $u,v \in C_i$ in the same cluster $C_i$, whereas an external edge $uv$ has its endpoints in different clusters $u \in C_i$ and $v \in C_j$, for $i\neq j$.
Then, the goal of {\sc Cluster Deletion} is to minimize the number of external edges which is equivalent to maximize the number of internal edges.
We write $S(G)$ to denote an optimal solution for {\sc Cluster Deletion} of the graph $G$, that is, a cluster subgraph of $G$ having the maximum number of edges.
Given a solution $S(G)$, the number of edges incident only to the same cluster, that is the number of internal edges, is denoted by $|S(G)|$.

For a clique $C$, we say that a vertex $x$ is {\it $C$-compatible} if $C\setminus\{x\} \subseteq N(x)$. 
We start with few preliminary observations regarding twin vertices.
Notice that for true twins $x$ and $y$, if $x$ belongs to any cluster $C$ then $y$ is $C$-compatible.

\begin{lemma}[\cite{BDM15}]\label{lem:truetwin}
Let $x$ and $y$ be true twins in $G$. Then, in any optimal solution $x$ and $y$ belong to the same cluster.
\end{lemma}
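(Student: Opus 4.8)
The plan is to argue by a local exchange argument. Suppose, for contradiction, that some optimal solution places $x$ in a cluster $C_i$ and $y$ in a cluster $C_j$ with $i \neq j$. Since $x$ and $y$ are true twins we have $N[x] = N[y]$; in particular $x$ and $y$ are adjacent, and every vertex of $C_i \setminus \{x\}$ lies in $N(x) \subseteq N[x] = N[y]$, hence is adjacent to $y$. Together with $x \in N(y)$ this gives $C_i \subseteq N(y)$, so $C_i \cup \{y\}$ is a clique; symmetrically $C_j \cup \{x\}$ is a clique. (This is exactly the observation, stated just before the lemma, that $y$ is $C_i$-compatible whenever $x \in C_i$.)

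Now I would consider two candidate modifications of the optimal partition. In the first, move $y$ out of $C_j$ and into $C_i$; the resulting family of vertex sets is still a partition into cliques, since the only changed parts are $C_i \cup \{y\}$ and $C_j \setminus \{y\}$, both of which are cliques, so it is a feasible cluster subgraph. Its number of internal edges changes by $+|C_i| - (|C_j| - 1)$, because we add all $|C_i|$ edges from $y$ to $C_i$ and delete the $|C_j| - 1$ edges from $y$ to $C_j \setminus \{y\}$ (using that $C_j$ is a clique). In the second modification we symmetrically move $x$ into $C_j$, which changes the number of internal edges by $+|C_j| - (|C_i| - 1)$.

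The two increments sum to $(|C_i| - |C_j| + 1) + (|C_j| - |C_i| + 1) = 2$, so at least one of them is strictly positive. Hence at least one of the two modifications yields a cluster subgraph with strictly more internal edges than the supposedly optimal starting solution, a contradiction. Therefore every optimal solution places $x$ and $y$ in the same cluster.

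I do not expect a genuine obstacle here. The only points requiring a little care are confirming that each modified family is actually a partition into cliques (so that it is a feasible solution of {\sc Cluster Deletion}) and counting the gained and lost internal edges correctly, for which the clique structure of $C_i$ and $C_j$ is exactly what lets us replace "number of neighbors inside a cluster" by "cluster size minus one". The averaging observation that the two symmetric increments total $2$ is what keeps the proof clean, as it removes any need to case split on whether $|C_i| \geq |C_j|$ or not, including the degenerate cases where one of the two clusters is trivial.
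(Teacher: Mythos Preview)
Your proof is correct. The paper does not actually prove this lemma---it is imported from \cite{BDM15} without argument---so there is no in-paper proof to compare against directly. That said, your exchange argument is exactly the technique the paper uses to establish its generalization, Lemma~\ref{lem:compatible}: there the authors case-split on whether $|C|\ge|C'|$ or $|C|<|C'|$ to select the strictly improving swap, whereas your averaging observation (the two increments sum to $2$, so one is positive) dispenses with the case split. Both routes are equally short and equally rigorous.
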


The above lemma shows that we can contract true twins and look for a solution on a vertex-weighted graph that does not contain true twins.
Even though false twins cannot be grouped into the same cluster as they are non-adjacent, we can actually disregard one of the false twins whenever their neighborhood forms a clique.
\begin{lemma}\label{lem:falsetwin}
Let $x$ and $y$ be false twins in $G$ such that $N(x)=N(y)$ is a clique.
Then, there is an optimal solution such that $x$ constitutes a trivial cluster.
\end{lemma}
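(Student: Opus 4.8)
The plan is to use a single exchange argument on an arbitrary optimal solution. Write $K := N(x) = N(y)$, which by hypothesis is a clique, and recall that $xy \notin E(G)$, so that $K \cup \{x,y\}$ is \emph{not} a clique. Fix an optimal solution $S$ and let $C_x$ (resp.\ $C_y$) be the cluster of $S$ containing $x$ (resp.\ $y$). Since $C_x$ is a clique of $G$ containing $x$, every other vertex of $C_x$ is a neighbour of $x$, hence $C_x \subseteq K \cup \{x\}$; symmetrically $C_y \subseteq K \cup \{y\}$. Moreover $C_x \neq C_y$: a common cluster would be a clique containing both $x$ and $y$, contradicting $xy \notin E(G)$. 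In particular $C_x$ and $C_y$ are disjoint.

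Next I would transform $S$ into a solution $S'$ in which $x$ is a trivial cluster, by replacing the two clusters $C_x$ and $C_y$ with the two clusters $\{x\}$ and $D := (C_x \setminus \{x\}) \cup C_y$, keeping every other cluster of $S$ unchanged. The key point is that $D$ is again a clique: $C_x \setminus \{x\} \subseteq K$ and $C_y \setminus \{y\} \subseteq K$, while $y$ is complete to $K$; hence $D \subseteq K \cup \{y\}$, which is a clique since $K$ is a clique and $y$ is adjacent to all of $K$. Thus $S'$ is again a partition of $V(G)$ into cliques, i.e.\ a feasible solution of {\sc Cluster Deletion}.

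It then remains to compare $|S'|$ with $|S|$. Writing $a := |C_x| - 1 = |C_x \setminus \{x\}|$ and $b := |C_y|$, the two clusters $C_x, C_y$ of $S$ contribute $\binom{a+1}{2} + \binom{b}{2}$ internal edges, whereas in $S'$ the clusters $\{x\}$ and $D$ contribute $\binom{a+b}{2}$ internal edges (using that $C_x\setminus\{x\}$ and $C_y$ are disjoint, so $|D| = a+b$). A direct computation gives $|S'| - |S| = \binom{a+b}{2} - \binom{a+1}{2} - \binom{b}{2} = a(b-1) \ge 0$, since $a \ge 0$ and $b \ge 1$. As $S$ is optimal, this forces $|S'| = |S|$, so $S'$ is an optimal solution in which $x$ forms a trivial cluster, which is the claim.

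I do not expect a genuine obstacle here, as the argument is a one-shot local rearrangement. The only two places needing care are: verifying that $D$ is indeed a clique — this is exactly where the hypothesis that $N(x)=N(y)$ is a clique is used, together with $C_y \subseteq K \cup \{y\}$ — and checking that $C_x$ and $C_y$ are distinct (hence disjoint), so that the count of internal edges over the rearranged clusters is valid.
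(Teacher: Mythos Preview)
Your proof is correct and follows essentially the same exchange argument as the paper: merge $C_x\setminus\{x\}$ into the cluster of $y$ and isolate the remaining twin, using that $K\cup\{y\}$ is a clique. Your version is in fact slightly cleaner than the paper's, since your inequality $a(b-1)\ge 0$ covers all cases uniformly without assuming $|C_x|,|C_y|\ge 2$, and you isolate $x$ rather than $y$, matching the statement exactly.
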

\begin{proof}
Let $C_x$ and $C_y$ be the clusters of $x$ and $y$, respectively, in an optimal solution such that $|C_x| \geq 2$ and $|C_y| \geq 2$.
We construct another solution by replacing both clusters by $C_x \cup C_y \setminus \{y\}$ and $\{y\}$, respectively.
To see that this indeed a solution, first observe that $x$ is adjacent to all the vertices of $C_y \setminus \{y\}$ because $N(x)=N(y)$, and $C_x \cup C_y \setminus \{y\} \subseteq N[x]$ forms a clique by the assumption.
Moreover, since $|C_x| \geq 2$ and $|C_y| \geq 2$, we know that $|C_x|+|C_y| \leq |C_x||C_y|$, implying that the number of internal edges in the constructed solution is at least as the number of internal edges of the optimal solution.
\end{proof}

Moreover, we prove the following generalization of Lemma~\ref{lem:truetwin}.
\begin{lemma}\label{lem:compatible}
Let $C$ and $C'$ be two clusters of an optimal solution and let $x\in C$ and $y \in C'$.
If $y$ is $C$-compatible then $x$ is not $C'$-compatible.
\end{lemma}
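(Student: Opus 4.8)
The plan is to argue by contradiction through a local exchange between the two clusters. Suppose, for the sake of contradiction, that $y$ is $C$-compatible \emph{and} $x$ is $C'$-compatible. Since $C$ and $C'$ are distinct clusters we have $y\notin C$ and $x\notin C'$, so the two compatibility assumptions simply read $C\subseteq N(y)$ and $C'\subseteq N(x)$; in particular both $C\cup\{y\}$ and $C'\cup\{x\}$ induce cliques of $G$. Write $a=|C|$ and $b=|C'|$, and recall that in an optimal solution the objective to be maximised is the number of internal edges.

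First I would modify the optimal solution by moving $y$ out of $C'$ into $C$: this replaces the clusters $C$ and $C'$ by $C\cup\{y\}$ and $C'\setminus\{y\}$ (the latter being dropped if it becomes empty), and leaves every other cluster untouched. The result is a valid cluster subgraph because $C\cup\{y\}$ is a clique. The only internal edges that change are those incident to $y$ inside these two clusters: we lose the $b-1$ edges from $y$ to $C'\setminus\{y\}$ and gain the $a$ edges from $y$ to $C$, for a net change of $a-b+1$ in the number of internal edges. Symmetrically, moving $x$ out of $C$ into $C'$ produces a valid cluster subgraph (since $C'\cup\{x\}$ is a clique) whose number of internal edges changes by $b-a+1$.

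Finally I would observe that $(a-b+1)+(b-a+1)=2$, hence at least one of these two quantities is at least $1$: if $a\ge b$ the first exchange strictly increases the number of internal edges, and if $a<b$ the second one does. In either case we exhibit a cluster subgraph of $G$ with strictly more edges than the optimal solution, a contradiction; therefore $y$ being $C$-compatible forces $x$ not to be $C'$-compatible. The argument is short, and the only step needing genuine care is the bookkeeping of internal edges for the two modified solutions — one must verify that relocating a single vertex between $C$ and $C'$ touches exactly the edges incident to it within those two cliques, so that the net change is precisely the stated linear expression, and that enlarging a clique by a compatible vertex (or discarding an emptied cluster) preserves feasibility. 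I do not expect any further obstacle. As a sanity check, this statement generalises Lemma~\ref{lem:truetwin}: if true twins $x,y$ lay in different clusters $C,C'$ of an optimal solution, then $x\in C\subseteq N[y]$ gives $C\subseteq N(y)$ and likewise $C'\subseteq N(x)$, so $y$ is $C$-compatible and $x$ is $C'$-compatible, which Lemma~\ref{lem:compatible} forbids.
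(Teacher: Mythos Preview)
Your proof is correct and follows essentially the same approach as the paper: both assume for contradiction that $x$ is $C'$-compatible, consider the two single-vertex exchanges (moving $y$ into $C$, or $x$ into $C'$), and show that at least one strictly increases the number of internal edges. Your direct edge-count bookkeeping $a-b+1$ and $b-a+1$ is precisely the unpacked form of the paper's binomial inequalities $\binom{|C|+1}{2}+\binom{|C'|-1}{2}>\binom{|C|}{2}+\binom{|C'|}{2}$, so the arguments coincide.
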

\begin{proof}
Let $S$ be an optimal solution such that $C, C' \in S$.
Assume for contradiction that $x$ is $C'$-compatible.
We show that $S$ is not optimal.
Since $y$ is $C$-compatible, we can move $y$ to $C$ and obtain a solution $S_y$ that contains the clusters $C \cup \{y\}$ and $C' \setminus \{y\}$.
Similarly, we construct a solution $S_x$ from $S$, by moving $x$ to $C'$ so that $C\setminus\{x\}, C' \cup \{x\} \in S_x$.
Notice that the $S_x$ forms a clustering, since $x$ is $C'$-compatible.
We distinguish between the following cases, according to the values $|C|$ and $|C'|$.
\begin{itemize}
\item If $|C| \geq |C'|$ then $|S_y|>|S|$, because ${{|C|+1} \choose 2} + {{|C'|-1} \choose 2} > {{|C|} \choose 2} + {{|C'|} \choose 2}$.
\item If $|C| < |C'|$ then $|S_x|>|S|$, because ${{|C|-1} \choose 2} + {{|C'|+1} \choose 2} > {{|C|} \choose 2} + {{|C'|} \choose 2}$.
\end{itemize}
In both cases we reach a contradiction to the optimality of $S$.
Therefore, $x$ is not $C'$-compatible.
\end{proof}

\begin{corollary}\label{cor:inclusion}
Let $C$ be a cluster of an optimal solution and let $x\in C$.
If there is a vertex $y$ that is $C$-compatible and $N[y] \subseteq N[x]$, then $y$ belongs to $C$.
\end{corollary}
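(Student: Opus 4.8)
The plan is to prove the statement by contradiction, leaning entirely on Lemma~\ref{lem:compatible}. Suppose $y \notin C$. Since the optimal solution partitions $V(G)$ into clusters, $y$ lies in some cluster $C' \neq C$. In particular $x \neq y$ (as $x \in C$ and $y \notin C$), and $x \notin C'$.

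Next I would verify that $x$ is $C'$-compatible, i.e.\ that $C' \setminus \{x\} = C' \subseteq N(x)$. Two ingredients suffice. First, $C'$ is a clique containing $y$, so every vertex of $C' \setminus \{y\}$ is a neighbour of $y$; combined with the hypothesis $N[y] \subseteq N[x]$ this gives $C' \setminus \{y\} \subseteq N[x]$, and since $x \notin C'$ we may drop the closed neighbourhood to get $C' \setminus \{y\} \subseteq N(x)$. Second, because $y$ is $C$-compatible we have $C \setminus \{y\} \subseteq N(y)$, and $x \in C \setminus \{y\}$, so $x \in N(y)$, that is $y \in N(x)$. Putting these together, $C' \subseteq N(x)$, so $x$ is $C'$-compatible.

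Now the contradiction is immediate: $x \in C$, $y \in C'$, and $y$ is $C$-compatible, so Lemma~\ref{lem:compatible} asserts that $x$ is \emph{not} $C'$-compatible, contradicting the previous paragraph. Hence $y \in C$.

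I do not expect a genuine obstacle here, since the statement is essentially a direct corollary; the only things to be careful about are the bookkeeping with closed versus open neighbourhoods and the harmless degenerate case where $C'$ is the trivial cluster $\{y\}$, which is already subsumed by Lemma~\ref{lem:compatible}.
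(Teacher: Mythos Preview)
Your proof is correct and follows essentially the same route as the paper's: assume $y$ lies in a different cluster $C'$, use $N[y]\subseteq N[x]$ to show that $x$ is $C'$-compatible, and then invoke Lemma~\ref{lem:compatible} for the contradiction. Your write-up is slightly more careful with the open/closed neighbourhood bookkeeping than the paper's (which handles the case $u=y$ implicitly via $y\in N[y]\subseteq N[x]$ rather than via $C$-compatibility as you do), but the argument is the same.
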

\begin{proof}
Assume for contradiction that $y$ belongs to a cluster $C'$ different than $C$.
Then, observe that $x$ is $C'$-compatible. Indeed, for any vertex $u$ of $C'$, we know $xu \in E(G)$, since $u$ is adjacent to $y$ and $N[y] \subseteq N[x]$.
Thus, by Lemma~\ref{lem:compatible} we reach a contradiction, so that $y \in C$.
\end{proof}

%
%


\section{Polynomial-time algorithm on interval graphs}
Here we present a polynomial-time algorithm for the {\sc Cluster Deletion} problem
on interval graphs.
A graph is an \emph{interval graph} if there is a bijection between its vertices and a
family of closed intervals of the real line such that two vertices are adjacent if and
only if the two corresponding intervals intersect.
Such a bijection is called an \emph{interval representation} of the graph, denoted by $\mathcal{I}$.
We identify the intervals of the given representation with the vertices of the graph, interchanging these notions appropriately.
Whether a given graph is an interval graph can be decided in linear time and if so, an interval representation can be generated in linear time~\cite{FG65}.
Notice that every induced subgraph of an interval graph is an interval graph.


Let $G$ be an interval graph. Instead of working with the interval representation of $G$, we consider its sequence of maximal cliques.
It is known that a graph $G$ with $p$ maximal cliques is an interval graph if and only if
there is an ordering $K_1, \ldots, K_{p}$ of the maximal cliques of $G$, such that for each vertex $v$ of $G$,
the maximal cliques containing $v$ appear consecutively in the ordering (see e.g., \cite{BraLeSpi99}).
A path $\mathcal{P} = K_1 \cdots K_{p}$ following such an ordering is called a \emph{clique path} of $G$.
Notice that a clique path is not necessarily unique for an interval graph.
Also note that an interval graph with $n$ vertices contains at most $n$ maximal cliques.
By definition, for every vertex $v$ of $G$, the maximal cliques containing $v$ form a connected subpath in $\mathcal{P}$.

Given a vertex $v$, we denote by $K_{a(v)}, \ldots, K_{b(v)}$ the maximal cliques containing $v$ with respect to $\mathcal{P}$, where $K_{a(v)}$ and $K_{b(v)}$
are the {\it first} ({\it leftmost}) and {\it last} ({\it rightmost}) maximal cliques containing $v$. Notice that $a(v) \leq b(v)$ holds.
Moreover, for every edge of $G$ there is a maximal clique $K_i$ of $\mathcal{P}$ that contains both endpoints of the edge.
Thus, two vertices $u$ and $v$ are adjacent if and only if $a(v) \leq a(u) \leq b(v)$ or $a(v) \leq b(u) \leq b(v)$.

\newcommand{\amin}{a\text{-}\min}
\newcommand{\amax}{a\text{-}\max}
\newcommand{\bmin}{b\text{-}\min}
\newcommand{\bmax}{b\text{-}\max}

For a set of vertices $U \subseteq V$, we write $\amin U$ and $\amax U$ to denote the minimum and maximum value, respectively, among all $a(u)$ with $u\in U$.
Similarly, $\bmin U$ and $\bmax U$ correspond to the minimum and maximum value, respectively, with respect to $b(u)$.


With respect to the {\sc Cluster Deletion} problem, observe that for any cluster $C$ of a solution, we know that $C \subseteq K_i$ where $K_i \in \mathcal{P}$, as $C$ forms a clique.
A vertex $y$ is called {\it guarded} by two vertices $x$ and $z$ if
$$
\min\{a(x),a(z)\} \leq a(y) \text{ and } b(y) \leq \max\{b(x),b(z)\}.
$$
For a clique $C$, observe that $y$ is $C$-compatible if and only if there exists a maximal clique $K_i$ such that $C\subseteq K_i$ with $a(y) \leq i \leq b(y)$. 

\begin{lemma}\label{lem:inclusion}
Let $x,y,z$ be three vertices of $G$ such that $y$ is guarded by $x$ and $z$.
If $x$ and $z$ belong to the same cluster $C$ of an optimal solution and $y$ is $C$-compatible then $y \in C$.
\end{lemma}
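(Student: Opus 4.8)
The plan is to argue by contradiction and reduce everything to Lemma~\ref{lem:compatible}. Suppose that $x$ and $z$ lie in a cluster $C$ of an optimal solution $S$, that $y$ is $C$-compatible, but $y$ belongs to some other cluster $C'\in S$ with $C'\neq C$. Since $C'$ is a clique, there is a maximal clique $K_j$ of the clique path $\mathcal{P}$ with $C'\subseteq K_j$, and because $y\in C'\subseteq K_j$ we get $a(y)\leq j\leq b(y)$. The whole argument will hinge on showing that this single index $j$ is forced to lie in the clique interval of $x$ or in that of $z$.

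The key structural observation is that $[a(x),b(x)]\cup[a(z),b(z)]$ is in fact a single interval, namely $[\min\{a(x),a(z)\},\max\{b(x),b(z)\}]$. Indeed, $x$ and $z$ are adjacent since they lie in the common cluster $C$, so by the fact that every edge of $G$ is contained in some maximal clique of $\mathcal{P}$, the intervals $[a(x),b(x)]$ and $[a(z),b(z)]$ intersect; and the union of two intersecting integer intervals equals the interval spanned by their extreme endpoints. Combining this with the guarding hypothesis $\min\{a(x),a(z)\}\leq a(y)$ and $b(y)\leq\max\{b(x),b(z)\}$ gives
\[
j \in [a(y),b(y)] \subseteq [\min\{a(x),a(z)\},\max\{b(x),b(z)\}] = [a(x),b(x)]\cup[a(z),b(z)],
\]
so $a(x)\leq j\leq b(x)$ or $a(z)\leq j\leq b(z)$; without loss of generality assume the former.

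Then $K_j$ is a maximal clique with $C'\subseteq K_j$ and $a(x)\leq j\leq b(x)$, so by the clique-path characterization of compatibility, $x$ is $C'$-compatible. But $x\in C$, $y\in C'$, and $y$ is $C$-compatible, so Lemma~\ref{lem:compatible} asserts that $x$ is \emph{not} $C'$-compatible, a contradiction; the symmetric case ($z\in K_j$) is identical with $z$ in place of $x$. Hence no such cluster $C'$ exists and $y\in C$. I expect the only nonroutine point to be the structural observation that the clique intervals of $x$ and $z$ merge into one interval — this is exactly where the adjacency of $x$ and $z$ (a consequence of their being in the same cluster) is used, and without it the guarding condition would only place $j$ in the union of two possibly disjoint intervals; the rest is a direct unwinding of the definition of guarded and an application of Lemma~\ref{lem:compatible}.
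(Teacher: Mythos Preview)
Your proof is correct and follows essentially the same approach as the paper's: assume for contradiction that $y$ lies in a different cluster $C'$, pick a maximal clique $K_j$ containing $C'$, use the guarding condition together with the adjacency of $x$ and $z$ to force $j$ into $[a(x),b(x)]$ or $[a(z),b(z)]$, and then invoke Lemma~\ref{lem:compatible}. Your packaging of the core step as the single observation that $[a(x),b(x)]\cup[a(z),b(z)]$ is one interval is slightly cleaner than the paper's case analysis (which assumes $a(x)\leq a(y)\leq a(z)$ and argues that $i\notin[a(x),b(x)]$ forces $i\in[a(z),b(z)]$), but the content is identical.
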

\begin{proof}
To ease the presentation, for three non-negative numbers $i,j,k$ we write $i \in [j,k]$ if $j \leq i \leq k$ holds.
Without loss of generality, assume that $a(y) \in [a(x), a(z)]$.
Assume for contradiction that $y$ belongs to another cluster $C'$.
We apply Lemma~\ref{lem:compatible} to either $x$ and $y$ or $z$ and $y$.
To do so, we need to show that $x$ is $C'$-compatible or $z$ is $C'$-compatible, as $y$ is already $C$-compatible.
Since $C'$ is a cluster that contains $y$, there is a maximal clique $K_i$ such that $C' \subseteq K_i$ with $i \in [a(y),b(y)]$.

We show that $i \in [a(x),b(x)]$ or $i \in [a(z),b(z)]$.
If $i \notin [a(x),b(x)]$ then $b(x) <i \leq b(y)$, because $a(x) \leq a(y) \leq i$.
As $y$ is guarded by $x$ and $z$, we know that $i \leq b(y) \leq b(z)$.
Now observe that if $i<a(z)$ then $b(x) < a(z)$, implying that $x$ and $z$ are non-adjacent, reaching a contradiction to the fact that $x,z \in C$.
Thus, $a(z) \leq i \leq b(z)$ which shows that $i \in [a(z),b(z)]$. This means that $i \in [a(x),b(x)]$ or $i \in [a(z),b(z)]$.

Hence, $x$ or $z$ belong to the maximal clique $K_i$ for which $C' \subseteq K_i$.
Therefore, at least one of $x$ or $z$ is $C'$-compatible and by Lemma~\ref{lem:compatible} we conclude that $y \in C$.
\end{proof}

Let $v_1, \ldots, v_n$ be an ordering of the vertices such that $b(v_1) \leq \cdots \leq b(v_n)$.
For every $v_i,v_j$ with $b(v_i) \leq b(v_j)$, we define the following set of vertices:
$$
V_{i,j} = \left\{v \in V(G): \min\{a(v_i),a(v_j)\}\leq a(v) \text{ and } b(v)\leq b(v_j) \right\}.
$$
That is, $V_{i,j}$ contains all vertices that are guarded by $v_i$ and $v_j$.
We write $a(i,j)$ to denote the value of $\min\{a(v_i),a(v_j)\}$ and we simple write $K_{a(j)}$ and $K_{b(j)}$ instead of $K_{a(v_j)}$ and $K_{b(v_j)}$.
Notice that for a neighbor $u$ of $v_j$ with $u \in V_{i,j}$, we have either $a(v_j)\leq a(u)$ or $a(v_i) \leq a(u) \leq a(v_j)$.
This means that all neighbors of $v_j$ that are totally included (i.e., all vertices $u$ such that $a(v_j) \leq a(u) \leq b(u) \leq b(v_j)$) belong to $V_{i,j}$ for any $v_i$ with $b(v_i) \leq b(v_j)$.
To distinguish such neighbors of $v_j$, we define the following sets:
\begin{itemize}
\item $U(j)$ contains the neighbors $u \in V_{i,j}$ of $v_j$ such that $a(u) < a(v_j) \leq b(u) \leq b(v_j)$ (neighbors of $v_j$ in $V_{i,j}$ that partially overlap $v_j$).
\item $M(j)$ contains the neighbors $w \in V_{i,j}$ of $v_j$ such that $a(v_j) \leq a(w) \leq b(w) \leq b(v_j)$ (neighbors of $v_j$ that are totally included within $v_j$).
\end{itemize}

In the forthcoming arguments, we restrict ourselves to the graph induced by $V_{i,j}$.
It is clear that the first maximal clique that contains a vertex of $V_{i,j}$ is $K_{a(i,j)}$, whereas the last maximal clique is $K_{b(j)}$.

We now explain the necessary sets that our dynamic programming algorithm uses in order to compute an optimal solution of $G$.
For two vertices $v_i,v_j$ with $b(v_i) \leq b(v_j)$, we define the following:
\begin{itemize}
\item $A_{i,j}$ is the value of an optimal solution for {\sc Cluster Deletion} of the graph $G[V_{i,j}]$.
\end{itemize}
To ease the notation, when we say a cluster of $A_{i,j}$ we mean a cluster of an optimal solution of $G[V_{i,j}]$.
Notice that $A_{1,n}$ is the desired value for the whole graph $G$, since $V_{1,n} = V(G)$.

Our task is to construct the values for $A_{i,j}$ by taking into account all possible clusters that contain $v_j$. 
To do so, we show that (i) the number of clusters containing $v_j$ in $A_{i,j}$ is polynomial and (ii) 
each such candidate cluster containing $v_j$ separates the graph in a recursive way with respect to optimal subsolutions.

Observe that if $v_iv_j \in E(G)$ then $v_i\in U(j)$ if and only if $a(v_i) < a(v_j)$, whereas $v_i \in M(j)$ if and only if $a(v_j) \leq a(v_i)$; in the latter case, it is not difficult to see that $V_{i,j} = M(j) \cup \{v_j\}$, according to the definition of $V_{i,j}$.
Thus, whenever $v_i \in M(j)$ holds, we have $V_{i,j}=V_{j,j}$.
The candidates of a cluster of $A_{i,j}$ containing $v_j$ lie among $U(j)$ and $M(j)$.
Let us show with the next two lemmas that we can restrict ourselves into a polynomial number of such candidates.
To avoid repeating ourselves, in the forthcoming statements we let $v_i,v_j$ be two vertices with $b(v_i) \leq b(v_j)$.


\begin{lemma}\label{lem:lower}
Let $C$ be a cluster of $A_{i,j}$ containing $v_j$. If there is a vertex $w \in M(j)$ such that $w\in C$ then
there is a maximal clique $K_{t}$ with $a(v_j) \leq t \leq b(v_j)$ such that $K_{t} \cap M(j) \subseteq C$ and $C \cap M(j) \subseteq K_{t}$.
\end{lemma}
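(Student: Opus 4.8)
The strategy is to exhibit $K_t$ as a maximal clique of the clique path that contains $C$ and lies in the span of $v_j$, and then to use the inclusion rules already proved to force all of $K_t\cap M(j)$ into $C$. Since $C$ is a cluster of an optimal solution of $G[V_{i,j}]$, it is a clique of $G$, hence contained in some maximal clique $K_t$ of $\mathcal{P}$; and because $v_j\in C\subseteq K_t$, we automatically get $a(v_j)\le t\le b(v_j)$. The inclusion $C\cap M(j)\subseteq K_t$ is then immediate, so the whole content of the lemma is the reverse inclusion $K_t\cap M(j)\subseteq C$.

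To prove $K_t\cap M(j)\subseteq C$, I would take an arbitrary $u\in K_t\cap M(j)$ and show $u\in C$ by applying Corollary~\ref{cor:inclusion} (or Lemma~\ref{lem:inclusion}) to the graph $G[V_{i,j}]$ and its optimal solution containing $C$. First, $u\in V_{i,j}$, since $M(j)\subseteq V_{i,j}$. Second, $u$ is $C$-compatible: both $C$ and $u$ lie in the clique $K_t$, so $C\setminus\{u\}\subseteq N(u)$. Third, $N[u]\subseteq N[v_j]$: membership $u\in M(j)$ means $a(v_j)\le a(u)\le b(u)\le b(v_j)$, so the set of maximal cliques containing $u$ is a subinterval of those containing $v_j$, and any vertex sharing a maximal clique with $u$ also shares one with $v_j$; intersecting with $V_{i,j}$ gives the inclusion inside $G[V_{i,j}]$. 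With $x=v_j\in C$ and $y=u$, Corollary~\ref{cor:inclusion} yields $u\in C$. (Using Lemma~\ref{lem:inclusion} instead, the hypothesis $w\in M(j)\cap C$ is what makes $u$ guarded by $w$ and $v_j$: $w\in M(j)$ forces $\min\{a(w),a(v_j)\}=a(v_j)\le a(u)$, while $b(u)\le b(v_j)\le\max\{b(w),b(v_j)\}$; then $w,v_j\in C$ together with $C$-compatibility of $u$ give $u\in C$.)

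I do not anticipate a genuine difficulty: the statement is essentially a packaging of the already-established inclusion rules. The only care needed is bookkeeping --- invoking the rules for the induced subgraph $G[V_{i,j}]$ rather than for $G$ (and checking that the neighborhood nesting $N[u]\subseteq N[v_j]$ survives the restriction), and converting ``$u\in M(j)$'' into the interval nesting $[a(u),b(u)]\subseteq[a(v_j),b(v_j)]$ that drives the argument. It is also worth noting that the hypothesis $w\in M(j)\cap C$ is used only to ensure $C\cap M(j)\neq\emptyset$ (so that $C\cap M(j)$ really coincides with $K_t\cap M(j)$); the inclusion $K_t\cap M(j)\subseteq C$ itself holds for every maximal clique $K_t$ with $C\subseteq K_t$ and $a(v_j)\le t\le b(v_j)$.
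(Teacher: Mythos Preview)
Your proposal is correct and essentially follows the paper's proof: both pick a maximal clique $K_t$ containing $C$ (with $a(v_j)\le t\le b(v_j)$), note $C\cap M(j)\subseteq K_t$ trivially, and then force each $u\in K_t\cap M(j)$ into $C$ via the already-proved inclusion rules. The paper applies Lemma~\ref{lem:inclusion} with the guarding pair $(v_j,w)$, which is exactly your parenthetical alternative.

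Your primary route via Corollary~\ref{cor:inclusion} (using $N[u]\subseteq N[v_j]$ from $u\in M(j)$) is a slight streamlining: it shows the inclusion $K_t\cap M(j)\subseteq C$ without invoking $w$ at all, confirming your remark that the hypothesis $w\in C\cap M(j)$ is only there to make the statement nontrivial rather than to drive the argument. Since $w\in M(j)$ forces $\min\{a(w),a(v_j)\}=a(v_j)$ and $\max\{b(w),b(v_j)\}=b(v_j)$, the paper's guarded-pair application of Lemma~\ref{lem:inclusion} in fact collapses to the same neighborhood nesting you use, so the two routes differ only cosmetically.
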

\begin{proof}
Since $v_j,w \in C$, we know that there is a maximal clique $K_{t}$ for which $C \subseteq K_{t}$ with $a(v_j) \leq a(w) \leq t \leq \min\{b(v_j),b(w)\}$.
We show that all other vertices of $K_{t} \cap M(j)$ are guarded by $v_j$ and $w$.
Notice that for every vertex $y \in M(j)$ we already know that $a(v_j) \leq a(y)$ and $b(y) \leq b(v_j)$. 
Thus, for every vertex $y \in M(j)$ we have $a(v_j)=\min\{a(v_j),a(w)\} \leq a(y)$ and $b(y) \leq \max\{b(v_j),b(w)\}$.
This means that all vertices of $K_{t} \cap M(j) \setminus\{w\}$ are guarded by $v_j$ and $w$.
Moreover, since $C \subseteq K_{t}$, we know that all vertices of $K_{t} \cap M(j)$ are $C$-compatible.
Therefore, we apply Lemma~\ref{lem:inclusion} to every vertex of $K_{t} \cap M(j)$, showing that $K_{t} \cap M(j) \subseteq C$.
Furthermore, there is no vertex of $M(j) \setminus K_{t}$ that belongs to $C$, because $C \subseteq K_{t}$.
\end{proof}

By Lemma~\ref{lem:lower}, we know that we have to pick the entire set $K_{t} \cap M(j)$ for constructing candidates to form a cluster that contains $v_j$ and some vertices of $M(j)$.
As there are at most $n$ choices for $K_{t}$, we get a polynomial number of such candidate sets.
We next show that we can construct polynomial number of candidate sets that contain $v_j$ and vertices of $U(j)$.
For doing so, we consider the vertices of $U(j)$ increasingly ordered with respect to their first maximal clique. 
More precisely, let $U(j)_{\leq a} = (u_1, \ldots, u_{|U(j)|})$ be an increasingly order of the vertices of $U(j)$ such that $a(u_1) \leq \cdots \leq a(u_{|U(j)|})$. 
The right part of Figure~\ref{fig:CD_UM} illustrates the corresponding case. 

\begin{lemma}\label{lem:upper}
Let $C$ be a cluster of $A_{i,j}$ containing $v_j$ and let $u_{q} \in U(j)_{\leq a}$.
If $u_{q} \in C$ then every vertex of $\{u_{q+1}, \ldots, u_{|U(j)|}\}$ that is $C$-compatible belongs to $C$.
\end{lemma}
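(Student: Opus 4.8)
The plan is to mimic the argument of Lemma~\ref{lem:inclusion}: show that each relevant $u_p$ with $p > q$ is ``guarded'' by $v_j$ and $u_q$, so that membership of $u_q$ and $v_j$ in $C$ forces $u_p$ into $C$ as well. Fix $u_p \in \{u_{q+1}, \ldots, u_{|U(j)|}\}$ that is $C$-compatible. Since $u_p, u_q \in U(j)$, we have by definition $a(u_p) < a(v_j)$ and $b(u_p) \le b(v_j)$, and likewise for $u_q$. The ordering $U(j)_{\le a}$ gives $a(u_q) \le a(u_p)$. Combining these, $\min\{a(v_j), a(u_q)\} = a(u_q) \le a(u_p)$ and $b(u_p) \le b(v_j) = \max\{b(v_j), b(u_q)\}$, so $u_p$ is guarded by $v_j$ and $u_q$.

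Now I would invoke Lemma~\ref{lem:inclusion} directly: $v_j$ and $u_q$ lie in the same cluster $C$ of the optimal solution for $G[V_{i,j}]$, the vertex $u_p$ is guarded by them, and $u_p$ is $C$-compatible by hypothesis; hence $u_p \in C$. One small point to check first is that $u_p \in V_{i,j}$, so that Lemma~\ref{lem:inclusion} applies within $G[V_{i,j}]$ — but this is immediate, since $u_p \in U(j) \subseteq V_{i,j}$ by the definition of $U(j)$. Since $u_p$ was an arbitrary $C$-compatible vertex of $\{u_{q+1}, \ldots, u_{|U(j)|}\}$, all such vertices belong to $C$, which is the claim.

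I do not anticipate a genuine obstacle here; the lemma is essentially a specialization of Lemma~\ref{lem:inclusion} tailored to the ordered set $U(j)_{\le a}$. The only thing that requires a moment of care is verifying the guarding inequalities use the correct direction of the $U(j)_{\le a}$ ordering (it is the \emph{leftward} extent $a(\cdot)$ that is sorted, and we need $a(u_q) \le a(u_p)$ precisely because $q < p$), together with the two defining inequalities of $U(j)$ that pin $a(u_p)$ below $a(v_j)$ and $b(u_p)$ at or below $b(v_j)$. Once those are in place the conclusion is a one-line application of Lemma~\ref{lem:inclusion}.
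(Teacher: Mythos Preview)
Your proposal is correct and follows essentially the same approach as the paper: show that any $u_p$ with $p>q$ is guarded by $u_q$ and $v_j$ using the ordering of $U(j)_{\le a}$ together with the defining inequalities of $U(j)$, and then apply Lemma~\ref{lem:inclusion}. Your extra remark that $u_p\in U(j)\subseteq V_{i,j}$ (so the lemma applies within $G[V_{i,j}]$) is a small point the paper leaves implicit.
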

\begin{proof}
Let $u$ be a vertex of $\{u_{q+1}, \ldots, u_{|U(j)|}\}$. 
We show that $u$ is guarded by $u_{q}$ and $v_j$.
By the definition of $U(j)_{\leq a}$, we know that $a(u_{q}) < a(u) < a(v_j)$.
Moreover, observe that $b(u) \leq b(v_j)$ holds by the fact that $u \in V_{i,j}$ and $b(u_{q}) \leq b(v_j)$.
Thus, we apply Lemma~\ref{lem:inclusion} to $u$, because $u_{q},v_j \in C$ and $u$ is $C$-compatible, showing that $u \in C$ as desired.
\end{proof}

For $a(v_j) \leq t \leq b(v_j)$, let $M[{t}] = K_{t} \cap M(j)$.
Observe that each $M[{t}]$ may be an empty set.
On the part $M(j)$, all vertices are grouped into the sets $M[{a(v_j)}], \ldots, M[{b(v_j)}]$.
Similar to $M[{t}]$, let $U[{t}] = U(j) \cap K_{t}$.
Then, all vertices of $U[{t}]$ are $\{v_j,M[t]\}$-compatible and all vertices of $M[t]$ are $\{v_j,U[t]\}$-compatible. 
Figure~\ref{fig:CD_UM} depicts the corresponding sets. 

\begin{figure}[t]
\centering
\includegraphics[scale= 0.79]{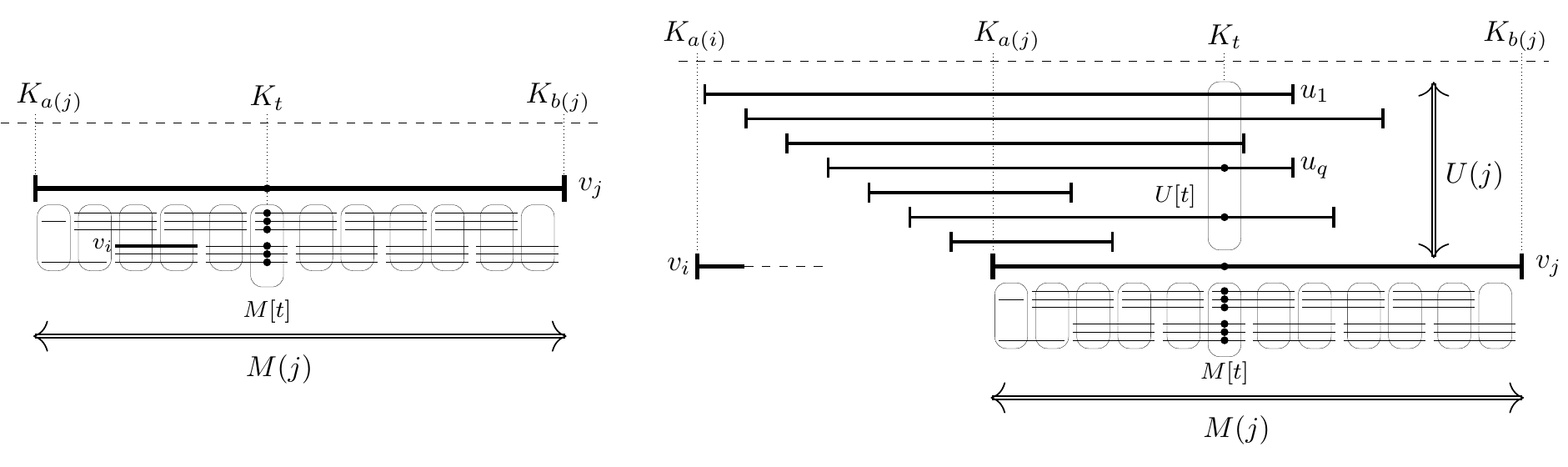}
\caption{Illustrating the sets $M(j)$ and $U(j)$ for $v_j$. 
The left part shows the case in which $v_i\in M(j)$ (or, equivalently, $V_{i,j}=V_{j,j}$), whereas the right part corresponds to the case in which $a(v_i) < a(v_j)$.}
\label{fig:CD_UM}
\end{figure}

\begin{lemma}\label{lem:existsMi1}
Let $C$ be a cluster of $A_{i,j}$ containing $v_j$.
Then, there is $a(v_j) \leq t \leq b(v_j)$ such that $M[{t}] \subseteq C$.
\end{lemma}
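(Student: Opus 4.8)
The plan is to prove the statement by exhibiting an explicit such $t$: I would let $K_{s}$ be any maximal clique of $G$ containing the clique $C$, and then argue that $t=s$ works. Since $C$ is a cluster of $A_{i,j}$ it is a clique of $G[V_{i,j}]$, hence a clique of $G$, so such a $K_{s}$ exists; and since every maximal clique of $G$ occurs on the clique path $\mathcal{P}$, this $K_{s}$ is one of $K_{a(v_j)},\dots,K_{b(v_j)}$ because $v_j\in C\subseteq K_{s}$. Thus $a(v_j)\le s\le b(v_j)$ and $M[s]=K_{s}\cap M(j)$ is meaningful; the whole content of the lemma is then the single inclusion $M[s]\subseteq C$.

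To establish $M[s]\subseteq C$, I would fix an arbitrary $w\in M[s]$ and apply Corollary~\ref{cor:inclusion} inside the graph $G[V_{i,j}]$, with $x:=v_j\in C$ and $y:=w$. There are two hypotheses to check. First, $w$ is $C$-compatible: this is immediate, since $C\subseteq K_{s}$ and $w\in K_{s}$ force $w$ to be adjacent to every other vertex of $C$, and all of $C$ lies in $V_{i,j}$. Second, $N[w]\subseteq N[v_j]$ in $G[V_{i,j}]$: here I would use that $w\in M(j)$ means precisely $a(v_j)\le a(w)\le b(w)\le b(v_j)$, so any vertex $z\in V_{i,j}$ adjacent to $w$ shares with $w$ some maximal clique $K_{r}$, which then satisfies $a(v_j)\le a(w)\le r\le b(w)\le b(v_j)$ and therefore also contains $v_j$; hence $z$ (and also $w$ itself, a neighbor of $v_j$) lies in $N_{G[V_{i,j}]}[v_j]$. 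Corollary~\ref{cor:inclusion} now yields $w\in C$, and since $w$ was arbitrary we conclude $M[s]\subseteq C$, which finishes the proof.

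I do not expect a real obstacle; the only thing to be careful about is the bookkeeping between $G$ and the induced subgraph $G[V_{i,j}]$, since $a(\cdot),b(\cdot),M(j),M[t]$ are all defined through the clique path $\mathcal{P}$ of $G$, whereas $C$ is a cluster of an optimal solution of $G[V_{i,j}]$ and Corollary~\ref{cor:inclusion} must be invoked for that induced subgraph. Because $M(j)\subseteq V_{i,j}$ and $C\subseteq V_{i,j}$, both checks above transfer to $G[V_{i,j}]$ verbatim, so no difficulty arises. As an alternative one could split on whether $C\cap M(j)=\emptyset$: in the nonempty case Lemma~\ref{lem:lower} directly hands us a $K_{t}$ with $a(v_j)\le t\le b(v_j)$ and $K_{t}\cap M(j)\subseteq C$, while in the empty case the $K_{s}$ argument above gives $M[s]=\emptyset\subseteq C$; but the uniform route via $K_{s}$ and Corollary~\ref{cor:inclusion} is cleaner, so that is the one I would take.
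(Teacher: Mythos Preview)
Your argument is correct and is in fact cleaner than the paper's own proof. The paper argues by contradiction: assuming no $M[t]$ lies in $C$, it deduces $C\cap M(j)=\emptyset$, then sets $i'=\bmin(U(j)\cap C)$, observes $C\subseteq K_{i'}$, and uses Lemma~\ref{lem:compatible} to force a vertex of $M[i']$ into $C$ (or $M[i']=\emptyset$), contradicting the assumption; finally it invokes Lemma~\ref{lem:lower} to upgrade a single $w\in C\cap M(j)$ to a full $M[t]\subseteq C$. Your route is direct: you pick any maximal clique $K_s\supseteq C$ and apply Corollary~\ref{cor:inclusion} with $x=v_j$ to every $w\in M[s]$, using the interval containment $[a(w),b(w)]\subseteq[a(v_j),b(v_j)]$ to get $N[w]\subseteq N[v_j]$ even in $G$, hence a fortiori in $G[V_{i,j}]$. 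This bypasses both the contradiction setup and the separate appeal to Lemma~\ref{lem:lower}, and it also sidesteps the small wrinkle in the paper's proof that $\bmin(U(j)\cap C)$ is undefined when $C=\{v_j\}$. One minor remark: in your aside about the alternative case split, the claim that ``the $K_s$ argument above gives $M[s]=\emptyset$'' in the empty case is not self-evident without already running Corollary~\ref{cor:inclusion}; but since you discard that route anyway, this does not affect your main proof.
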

\begin{proof}
Assume for contradiction that no set $M[{t}]$ is contained in $C$.
Let $U_C = U(j) \cap C$ and let $i' = \bmin(U_C)$.
Notice that $C = \{v_j\} \cup U_C$ because of the assumption as there are no other neighbors of $v_j$ in $V_{i,j}$.
Then, $a(v_j) \leq i' \leq b(v_j)$ holds, because $v_j \in C$.
We show that $M[i'] \subseteq C$.
Observe that $C \subseteq K_{i'}$.
If $M[i'] = \emptyset$ then clearly $M[i'] \subset C$.
Assume that $M[i'] \neq \emptyset$ and let $C'$ be a non-empty subset of $M[i']$ that forms a cluster in $A_{i,j}$.
Then, all vertices of $C$ are $C'$-compatible and all vertices of $C'$ are $C$-compatible, because $C,C' \in K_{t}$.
Thus, we reach a contradiction by Lemma~\ref{lem:compatible} to the optimality of $A_{i,j}$.
This means that there is a vertex $w \in M(j)$ that is contained in $C$ together with $v_j$.
Therefore, by Lemma~\ref{lem:lower}, there is a set $M[t] = K_{t} \cap M(j)$ that is included in $C$.
\end{proof}

All vertices of a cluster $C$ containing $v_j$ belong to $U(j) \cup M(j)$.
Thus, $C\setminus \{v_j\}$ can be partitioned into $C \cap U(j)$ and $C\cap M(j)$.
Also notice that $C \subseteq K_{t}$ for some $a(v_j) \leq t \leq b(v_j)$.
Combined with the previous lemmas, we can enumerate all such subsets $C$ of $U(j) \cup M(j)$ in polynomial-time.
In particular, we first build all candidates for $C\cap M(j)$, which are exactly the sets $M[{t}]$ by Lemma~\ref{lem:lower} and Lemma~\ref{lem:existsMi1}.
Then, for each of such candidate $M[{t}]$, we apply Lemma~\ref{lem:upper} to construct all subsets containing the last $q$ vertices of $U[{t}]_{\leq a}$. 
Thus, there are at most $n^2$ number of candidate sets from the vertices of $U(j) \cup M(j)$ that belong to the same cluster with $v_j$.

\subsection{Splitting into partial solutions}
We further partition the vertices of $M(j)$.
Given a pivot group $M[{t}]$, we consider the vertices that lie on the right part of $M[t]$.
More formally, for $a(v_j) \leq t < b(v_j)$, we define the set
$$
B_j(t) = \left(\left(K_{t+1} \cup \cdots \cup K_{b(j)}\right) \setminus K_{t} \right) \cap M(j).
$$
The reason of breaking the vertices of the part $M(j)$ into sets $B_j(t)$ is the following.
\begin{lemma}\label{lem:sepa2}
Let $C$ be a cluster of $A_{i,j}$ such that $\{v_j\} \cup M[{t}] \subseteq C$, for $a(v_j) \leq t \leq b(v_j)$.
Then, for any two vertices $x \in V_{i,j} \setminus B_j(t)$ and $y \in B_j(t)$, there is no cluster of $A_{i,j}$ that contains both of them.
\end{lemma}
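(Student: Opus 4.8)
The plan is to argue by contradiction and reduce to Lemma~\ref{lem:compatible} with $v_j$ playing one of the two roles. First I would pin down two structural facts about $C$. Since a cluster containing $v_j$ together with a vertex of $M(j)$ meets $M(j)$ in exactly one of the sets $M[{t'}]$ (Lemma~\ref{lem:lower}), the index $t$ in the statement may be taken with $C\cap M(j)=M[{t}]$; as $M[{t}]=K_t\cap M(j)$ while $B_j(t)\subseteq M(j)$ avoids $K_t$ by definition, this gives $C\cap B_j(t)=\emptyset$. Secondly, every $z\in C$ satisfies $a(z)\le t$: indeed $a(v_j)\le t$ by hypothesis, and any other $z\in C$ is a neighbour of $v_j$ lying in $V_{i,j}$, hence in $U(j)\cup M(j)$; if $z\in U(j)$ then $a(z)<a(v_j)\le t$, while if $z\in M(j)$ with $a(z)>t$ then $z\notin K_t$ although $z\in K_{a(z)}$ with $t<a(z)\le b(v_j)$, which forces $z\in B_j(t)$, impossible. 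Hence $\amax C\le t$.

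Now suppose, towards a contradiction, that some cluster $C'$ of $A_{i,j}$ contains a vertex $x\in V_{i,j}\setminus B_j(t)$ and a vertex $y\in B_j(t)$. Because $y\notin C$ (as $C\cap B_j(t)=\emptyset$), we have $C'\neq C$, and in particular $v_j\notin C'$. Since $C'$ induces a clique, there is a maximal clique $K_s$ with $C'\subseteq K_s$, so $x,y\in K_s$; and since $y\in B_j(t)$ gives $a(y)>t$ and $b(y)\le b(v_j)$, we get $t<s\le b(v_j)$, hence $a(v_j)\le s\le b(v_j)$ and thus $v_j\in K_s$. Therefore every vertex of $C'$ is adjacent to $v_j$, that is, $v_j$ is $C'$-compatible. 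If I can show that $x$ is $C$-compatible, then Lemma~\ref{lem:compatible} applied to $v_j\in C$ and $x\in C'$ yields that $v_j$ is not $C'$-compatible, a contradiction; so everything reduces to proving that $x$ is $C$-compatible.

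This last step is where I expect the real work to lie. Recall that $x$ is $C$-compatible exactly when some maximal clique contains $C$ and has index in $[a(x),b(x)]$, and that the indices $i$ with $C\subseteq K_i$ form a nonempty interval $[\alpha,\beta]$ (nonempty since $C$ is a clique) with $\alpha=\amax C\le t$ and $\beta=\bmin C\ge\alpha$; so it suffices to show $[a(x),b(x)]\cap[\alpha,\beta]\neq\emptyset$. From $x\in K_s$ with $s>t\ge\alpha$ we already have $b(x)\ge s>\alpha$, so the only thing left is $a(x)\le\beta$. Suppose $a(x)>\beta$; then $a(x)>\beta\ge\alpha\ge a(v_j)$, and since $b(x)\le b(v_j)$ because $x\in V_{i,j}$, we get $a(v_j)<a(x)\le b(x)\le b(v_j)$, so $x$ is a neighbour of $v_j$ totally contained within it, that is, $x\in M(j)$. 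But then $x\in M(j)\setminus B_j(t)$, and $b(x)\ge s>t$ shows that $x$ appears in some clique to the right of $K_t$; being outside $B_j(t)$ forces $x\in K_t$, hence $x\in K_t\cap M(j)=M[{t}]\subseteq C$, contradicting $x\in C'\neq C$. Thus $a(x)\le\beta$, $x$ is $C$-compatible, and the argument closes.
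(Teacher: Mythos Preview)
Your argument is correct and hinges on the same tool as the paper's, namely Lemma~\ref{lem:compatible} applied to $v_j\in C$ and $x\in C'$. The paper structures its proof as a case split on whether $x\in M(j)$ or $x\in U(j)$, disposing of the easy subcases ($xy\notin E(G)$, or $x\in C$ hence $y\notin C$) directly and invoking Lemma~\ref{lem:compatible} only for $x\in U(j)\setminus C$; you instead give a unified treatment, establishing that $v_j$ is $C'$-compatible and that $x$ is $C$-compatible in one pass, with the case analysis deferred to the verification of $a(x)\le\bmin C$. The two arguments are really the same proof in different packaging.

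One remark on your opening move: asserting that $t$ ``may be taken'' with $C\cap M(j)=M[t]$ is not a genuine without-loss-of-generality reduction, since replacing $t$ changes $B_j(t)$ and hence the very sets appearing in the conclusion. That said, the paper's own proof tacitly relies on $C\subseteq K_t$ as well---for instance, the step ``$x\in C$\dots which implies that $y\notin C$'' and the claim that $x\in K_t$ makes $x$ $C$-compatible both need it---so you are simply making explicit an assumption the lemma statement should really carry, and which does hold in every place the lemma is invoked.
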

\begin{proof}
First observe that $y \in (M[{t+1}] \cup \cdots \cup M[{b(j)}]) \setminus M[t]$.
We consider two cases for $x$, depending on whether $x\in M(j)$ or not. Assume that $x \in M(j)$.
If $x \in M[{t}]$, then $x \in C$ by Lemma~\ref{lem:lower}, which implies that $y\notin C$.
If $x \in (M[{a(v_j)}] \cup \cdots \cup M[{t-1}]) \setminus M[{t}]$ then $xy \notin E(G)$.

Now assume that $x \in U(j)$.
If $x \in C$, then $y$ does not belong to $K_{t}$, so that $y \notin C$.
If $x \notin C$, then we show that $x$ does not belong to a cluster with any vertex of $B_j(t)$.
Assume for contradiction that $x$ belongs to a cluster $C'$ such that $C' \cap B_j(t) \neq \emptyset$.
This means that $x\in K_{i'}$ with $t < i' \leq b(v_j)$ and $C' \subseteq K_{i'}$.
Then $v_j$ is $C'$-compatible and $x$ is $C$-compatible, as both $x$ and $v_j$ belong to $K_{t} \cap K_{i'}$.
Therefore, by Lemma~\ref{lem:compatible} we reach a contradiction to $x$ and $v_j$ belonging to different clusters.
\end{proof}


For a non-empty set $S \subseteq V(G)$, we write $A(S)$ to denote the following solutions:
\begin{itemize}
\item $A(S) = A_{i',j'}$, where $v_{i'}$ is the vertex of $S$ having the smallest $a(v_{i'})$ and $v_{j'}$ is the vertex of $S$ having the largest $b(v_{j'})$.
\end{itemize}
Having this notation, observe that $A_{i,j} = A(V_{i,j})$, for any $v_i,v_j$ with $b(v_i)\leq b(v_j)$.
However, it is important to notice that $A(S)$ does not necessarily represent the optimal solution of $G[S]$, since the vertices of $S$ may not be consecutive with respect to $V_{i',j'}$, so that $S$ is only a subset of $V_{i',j'}$ in the corresponding solution $A_{i',j'}$ for $A(S)$.
Under the following assumptions, with the next result we show that for the chosen sets we have $S=V_{i',j'}$.

\begin{observation}\label{obs:subsetS}
Let $v_i,v_j$ be two vertices with $b(v_i) \leq b(v_j)$ and let $V_{t} = K_{t} \cap V_{i,j}$,
for any maximal clique $K_t$ of $\mathcal{P}$ with $a(v_j) \leq t \leq b(v_j)$. 
\begin{enumerate}
\item[(i)] If $S_L = \left(V_{a(i,j)} \cup \cdots \cup V_{t-1}\right) \setminus V_t$ then $S_L=V_{i',j'}$, where $i' = \amin(S_L)$ and $j'=\bmax(S_L)$. 
\item[(ii)] If $S_R = \left(V_{t+1} \cup \cdots \cup V_{b(v_j)}\right) \setminus V_t$ then $S_R=V_{i',j'}$, where $i' = \amin(S_R)$ and $j'=\bmax(S_R)$.
\end{enumerate}
\end{observation}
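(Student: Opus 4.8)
The plan is to reduce both parts to a single, purely clique-index description of $S_L$ and $S_R$, and then to observe that any vertex set cut out by one lower bound on $a(\cdot)$ together with one upper bound on $b(\cdot)$ is, tautologically, of the form $V_{i',j'}$. I assume throughout that $S_L$ (in part (i)) and $S_R$ (in part (ii)) are non-empty, since otherwise $\amin$, $\bmax$, and hence the vertices $v_{i'}$, $v_{j'}$, are not defined and there is nothing to prove.

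\emph{Step 1: closed forms for $S_L$ and $S_R$.} Recall that $V_t = K_t \cap V_{i,j}$ and $V_{i,j} = \{v : a(i,j) \le a(v)\ \text{and}\ b(v) \le b(v_j)\}$, and fix $t$ with $a(v_j) \le t \le b(v_j)$. First I would show that a vertex $v \in V_{i,j}$ lies in $V_{a(i,j)} \cup \cdots \cup V_{t-1}$ if and only if $a(v) \le t-1$: indeed $v$ is in that union iff its clique-interval $[a(v),b(v)]$ meets $\{a(i,j), \ldots, t-1\}$, and since $a(i,j) \le a(v)$ is already guaranteed, this meeting is equivalent to $a(v) \le t-1$. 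Subtracting $V_t = K_t \cap V_{i,j}$ then removes precisely the vertices $v \in V_{i,j}$ with $a(v) \le t-1 < t \le b(v)$, i.e.\ those with $b(v) \ge t$; and since $t-1 < b(v_j)$, the surviving bound $b(v) \le t-1$ already entails $b(v) \le b(v_j)$. Hence
\[
S_L = \{\, v \in V(G) : a(i,j) \le a(v)\ \text{and}\ b(v) \le t-1 \,\}.
\]
By the mirror-image argument (a vertex $v \in V_{i,j}$ lies in $V_{t+1} \cup \cdots \cup V_{b(v_j)}$ iff $b(v) \ge t+1$, subtracting $V_t$ keeps those with $a(v) \ge t+1$, and $a(i,j) \le a(v_j) \le t < t+1$ makes the left bound of $V_{i,j}$ redundant) one gets
\[
S_R = \{\, v \in V(G) : t+1 \le a(v)\ \text{and}\ b(v) \le b(v_j) \,\}.
\]

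\emph{Step 2: matching $S_L$ (resp.\ $S_R$) with $V_{i',j'}$.} Treat $S_L$; the case of $S_R$ is identical. Pick $v_{i'} \in S_L$ with $a(v_{i'}) = \amin(S_L)$ and $v_{j'} \in S_L$ with $b(v_{j'}) = \bmax(S_L)$, and set $\alpha = a(v_{i'})$, $\beta = b(v_{j'})$. Since $v_{j'} \in S_L$ we have $b(v_{i'}) \le \beta = b(v_{j'})$, so $V_{i',j'}$ is defined, and since $v_{i'}$ minimizes $a(\cdot)$ over $S_L$ we have $\min\{a(v_{i'}), a(v_{j'})\} = \alpha$; hence $V_{i',j'} = \{ v : \alpha \le a(v)\ \text{and}\ b(v) \le \beta \}$. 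The closed form of Step 1 shows that every vertex of $S_L$ has $a$-value at least $a(i,j)$ and $b$-value at most $t-1$, so $\alpha \ge a(i,j)$ and $\beta \le t-1$. The inclusion $S_L \subseteq V_{i',j'}$ is then immediate from $\alpha,\beta$ being extrema over $S_L$; conversely, any $v$ with $\alpha \le a(v)$ and $b(v) \le \beta$ satisfies $a(i,j) \le \alpha \le a(v)$ and $b(v) \le \beta \le t-1$, so $v \in S_L$ by the closed form. Thus $S_L = V_{i',j'}$, and symmetrically $S_R = V_{i',j'}$ for the corresponding $v_{i'}, v_{j'}$.

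The only delicate point is the unfolding in Step 1: one must use the guarding inequality $a(i,j) \le a(v)$ to collapse ``$[a(v),b(v)]$ meets $\{a(i,j),\ldots,t-1\}$'' to the single inequality $a(v) \le t-1$ (and symmetrically for the right block), and then note that intersecting $V_{i,j}$ with a left block makes its right bound $b(v_j)$ vacuous — and vice versa for a right block and the left bound $a(i,j)$. Everything after that is bookkeeping: a set defined by one lower bound on $a(\cdot)$ and one upper bound on $b(\cdot)$ is its own $V_{i',j'}$.
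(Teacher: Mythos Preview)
Your proof is correct and follows essentially the same approach as the paper. Both arguments rest on the observation that $S_L$ (resp.\ $S_R$) is exactly the set of vertices $v$ with $a(i,j)\le a(v)$ and $b(v)\le t-1$ (resp.\ $t+1\le a(v)$ and $b(v)\le b(v_j)$), and then verify the two inclusions with $V_{i',j'}$; your presentation is in fact a bit cleaner, since you isolate this closed form first and use only the inequalities $\alpha\ge a(i,j)$, $\beta\le t-1$, whereas the paper (harmlessly but imprecisely) states them as equalities.
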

\begin{proof}
We prove the case for $S_L = \left(V_{a(i,j)} \cup \cdots \cup V_{t-1}\right) \setminus V_t$. 
As each $V_{t}$ contains vertices of $V_{i,j}$, we have $V_{i',j'} \subseteq V_{i,j}$.
Observe that either $a(v_{i'}) < a(v_{j'})$ or $a(v_{j'}) \leq a(v_{i'})$.
In both cases we show that $b(v_{j'})=t-1$.
Assume that there is a vertex $w \in S_L$ with $t-1 < b(w)$. Then $a(w) \leq t-1$ as $w\in S_L$, and $w \in K_t$ by the consecutiveness of the clique path.
This shows that $w\notin S_L$ because $w\in V_t$. Thus, $b(v_{j'})=t-1$.
We show that $a(v_{i'})=\min\{a(v_i),a(v_j)\}$.
If there is a vertex $w$ in $S_L$ with $a(w) < \min\{a(v_i),a(v_j)\}$ then $w \notin V_{i,j}$ leading to a contradiction that $V_{i',j'} \subseteq V_{i,j}$.
Hence we have $a(v_{i'})=\min\{a(v_i),a(v_j)\}$ and $b(v_{j'})=t-1$.
Moreover, observe that by the definition of $S_L$, we already know that $S_L \subseteq V_{i',j'}$.
Now it remains to notice that for every vertex $w$ with $\min\{a(v_i),a(v_j)\} \leq a(w)$ and $b(w) \leq t-1$ we have $w \in S_L$.
This follows from the fact that $w \in V_{a(w)} \cup \cdots \cup V_{b(w)}$ and $w \notin V_t$.
Therefore we get $S_L = V_{i',j'}$. 
Completely symmetric arguments along the previous lines, shows the case for $S_R$.
\end{proof}

Given the clique path $\mathcal{P}=K_1 \cdots K_p$, a {\it clique-index} $t$ is an integer $1 \leq t \leq p$.  
Let $\ell(j),r(j)$ be two clique-indices such that $a(i,j) \leq \ell(j) \leq a(v_j)$ and $a(v_j) \leq r(j) \leq b(v_j)$.
We denote by $\ell_r(j)$ the minimum value of $a(v)$ among all vertices of $v \in K_{r(j)} \cap V_{i,j}$ having $\ell(j) \leq a(v)$.
Clearly, $\ell(j) \leq \ell_r(j) \leq r(j)$ holds. 
A pair of clique-indices $(\ell(j),r(j))$ is called {\it admissible pair} for a vertex $v_j$,
if both $a(i,j) \leq \ell(j) \leq a(v_j)$ and $a(v_j) \leq r(j) \leq b(v_j)$ hold.
Given an admissible pair $(\ell(j),r(j))$, we define the following set of vertices:
\begin{itemize}
\item $C(\ell(j),r(j)) = \{z \in V_{i,j}: \ell_r(j) \leq a(z) \text{ and } r(j) \leq b(z)\}$.
\end{itemize}

Observe that all vertices of $C(\ell(j),r(j))$ induce a clique in $G$, because $C(\ell(j),r(j)) \subseteq K_{r(j)}$.
We say that a vertex $u$ {\it crosses} the pair $(\ell(j),r(j))$ if $a(u) < \ell_r(j)$ and $r(j) \leq b(u)$.
It is not difficult to see that for a vertex $u$ that crosses $(\ell(j),r(j))$, we have $u \notin C(\ell(j),r(j))$.
We prove the following properties of $C(\ell(j),r(j))$.

\begin{lemma}\label{lem:ellrpair}
Let $v_{i'},v_{j'}$ be two vertices with $b(v_{i'})\leq b(v_{j'})$ and let $(\ell,r)$ be an admissible pair for $v_{j'}$.
Moreover, let $v_i,v_j$ be the vertices of $V_{i',j'} \setminus C(\ell,r)$ having the smallest $a(v_i)$ and largest $b(v_j)$, respectively.
If the vertices of $C(\ell,r)$ form a cluster in $A_{i',j'}$ then 
the following statements hold:
\begin{enumerate}
\item $V_{i,j} = V_{i',j'} \setminus C(\ell,r)$.
\item If $a(x) \leq r \leq b(x)$ holds for a vertex $x \in V_{i,j}$, then $x$ crosses $(\ell,r)$. 
\item Every vertex of $B_j(r)$ does not belong to the same cluster with any vertex of $V_{i,j} \setminus B_j(r)$.
\item Every vertex that crosses $(\ell,r)$ does not belong to the same cluster with any vertex $y \in V_{i,j}$ having $\ell_{r} \leq a(y)$.
\end{enumerate}
\end{lemma}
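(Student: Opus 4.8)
The plan is to establish the four items in sequence: item~1 carries the structural content, and items~2--4 then follow from short arguments that reuse Lemma~\ref{lem:compatible}, Lemma~\ref{lem:inclusion}, and the separation idea behind Lemma~\ref{lem:sepa2}. Throughout I would use the fact that $v_{j'}\in C(\ell,r)$, which is immediate from admissibility of $(\ell,r)$ for $v_{j'}$ together with the definitions of $\ell_r$ and of $C(\ell,r)$.

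For item~1 I would prove the two inclusions. The inclusion $V_{i',j'}\setminus C(\ell,r)\subseteq V_{i,j}$ is immediate from the choice of $v_i$ and $v_j$: any $z\in V_{i',j'}\setminus C(\ell,r)$ satisfies $a(z)\ge a(v_i)\ge\min\{a(v_i),a(v_j)\}$ and $b(z)\le b(v_j)$, hence $z\in V_{i,j}$. For the converse, fix $v\in V_{i,j}$. Since $v_i,v_j\in V_{i',j'}$ we get $a(i',j')\le\min\{a(v_i),a(v_j)\}\le a(v)$ and $b(v)\le b(v_j)\le b(v_{j'})$, so $v\in V_{i',j'}$, and it only remains to exclude $v\in C(\ell,r)$. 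Assuming $v\in C(\ell,r)$, so $v\in K_r$ and $a(v)\ge\ell_r$, I would invoke the hypothesis that $C(\ell,r)$ is a cluster of the \emph{optimal} solution $A_{i',j'}$ and split according to how $v_j$ meets $K_r$: $v_j$ lies in $K_r$ (i.e.\ $v_j$ crosses $(\ell,r)$), $v_j$ lies strictly to the right of $K_r$, or $v_j$ does not reach $K_r$. In each case the aim is to produce a vertex of $V_{i',j'}\setminus C(\ell,r)$ that is $C(\ell,r)$-compatible and then conclude by Lemma~\ref{lem:compatible} (or Lemma~\ref{lem:inclusion}) that this vertex must already lie in $C(\ell,r)$, a contradiction. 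The case where $v_j$ is a crosser is the delicate one, and it is precisely there that optimality of $C(\ell,r)$, rather than just its combinatorial definition, has to be used.

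Item~2 then follows quickly: if $x\in V_{i,j}$ has $a(x)\le r\le b(x)$, then $x\in K_r$ and, by item~1, $x\in V_{i',j'}\setminus C(\ell,r)$; since $x$ fails the defining inequality of $C(\ell,r)$ while $r\le b(x)$ holds, the failure is on the $a$-side, i.e.\ $a(x)<\ell_r$, which together with $r\le b(x)$ is exactly the statement that $x$ crosses $(\ell,r)$. Items~3 and~4 are separation statements inside the optimal solution $A_{i,j}$, and I would prove them by the mechanism of Lemma~\ref{lem:sepa2}. For item~3, $B_j(r)$ consists of the vertices of $M(j)$ strictly to the right of $K_r$; if a vertex of $B_j(r)$ shared a cluster $D$ with a vertex $x\in V_{i,j}\setminus B_j(r)$, then $D\subseteq K_s$ for some clique-index $s>r$, and one argues that either $x$ lies strictly to the left of $K_r$ and is non-adjacent to $B_j(r)$, or $x$ and $v_j$ both lie in $K_s\cap K_t$ for a suitable $t\le r$ so that Lemma~\ref{lem:compatible} contradicts $x$ and $v_j$ lying in different clusters --- exactly the split from the proof of Lemma~\ref{lem:sepa2}. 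For item~4, a crosser $u$ lies in $K_r$ with $a(u)<\ell_r$; if $u$ shared a cluster $D$ with some $y\in V_{i,j}$ having $\ell_r\le a(y)$, then $D\subseteq K_s$ with $s\ge a(y)\ge\ell_r$, and using a witness vertex $w^*\in K_r$ with $a(w^*)=\ell_r$ one checks that $w^*$ becomes $D$-compatible while $u$ and $w^*$ sit in $K_{\ell_r}\cap K_r$, again yielding a contradiction through Lemma~\ref{lem:compatible}.

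The main obstacle is the reverse inclusion of item~1: items~2--4 are essentially clique-index bookkeeping once we know the leftover graph is precisely $G[V_{i,j}]$, but item~1 itself forces us to exclude the possibility that a vertex of the extracted cluster $C(\ell,r)$ falls inside the window $[\min\{a(v_i),a(v_j)\},\,b(v_j)]$ determined by $v_i$ and $v_j$, and handling this uniformly --- in particular the crosser case --- is where the optimality of $C(\ell,r)$ must be brought in via the compatibility lemmas.
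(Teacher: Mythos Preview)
For items~2--4 your outline coincides with the paper's. Item~2 is immediate from item~1 and the definition of crossing; item~3 is obtained by noting that $B_j(r)\cap K_r=\emptyset$ so that $B_j(r)\subseteq V_{i,j}\cap V_{i',j'}$, after which Lemma~\ref{lem:sepa2} (applied in $A_{i',j'}$ with the cluster $C(\ell,r)\ni v_{j'}$) together with item~1 gives the separation; and item~4 is handled by splitting on whether $r<a(y)$ (reduce to item~3) or $\ell_r\le a(y)\le b(y)<r$, in which case the vertex $z\in C(\ell,r)$ with $a(z)=\ell_r$ is exhibited as the $C_{xy}$-compatible witness needed for Lemma~\ref{lem:compatible}. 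Your witnesses and case splits for items~3--4 match the paper's. Where you diverge is item~1: the paper proves it purely set-theoretically, \emph{without} invoking the cluster hypothesis. From $v_i,v_j\in V_{i',j'}$ one gets $a(i,j)\ge a(i',j')$ and $b(v_j)\le b(v_{j'})$, hence $V_{i,j}\subseteq V_{i',j'}$; combined with the trivial inclusion $V_{i',j'}\setminus C(\ell,r)\subseteq V_{i,j}$ the paper concludes directly. It never separately argues that $V_{i,j}\cap C(\ell,r)=\emptyset$.

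Your attempt to rule out $V_{i,j}\cap C(\ell,r)\neq\emptyset$ via optimality is where your sketch has a genuine gap. You write that you would exhibit a $C(\ell,r)$-compatible vertex of $V_{i',j'}\setminus C(\ell,r)$ and then ``conclude by Lemma~\ref{lem:compatible} (or Lemma~\ref{lem:inclusion}) that this vertex must already lie in $C(\ell,r)$.'' Neither lemma delivers that conclusion. Lemma~\ref{lem:compatible} only forbids two distinct clusters from being mutually compatible; it does not force a compatible outsider into a given cluster. Lemma~\ref{lem:inclusion} does force membership, but it requires the outsider to be \emph{guarded} by two members of the cluster---and in precisely the case you yourself flag as delicate, where $v_j$ crosses $(\ell,r)$, every $z\in C(\ell,r)$ satisfies $a(z)\ge\ell_r>a(v_j)$, so $v_j$ cannot be guarded by any pair drawn from $C(\ell,r)$. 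Thus the mechanism you propose does not close the crosser case; whatever argument is needed there, it is not the one you have sketched.
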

\begin{proof} 
First we show that $V_{i,j}= V_{i',j'} \setminus C(\ell,r)$.
Assume that there is a vertex $v \in V_{i,j} \setminus V_{i',j'}$.
Then $v \notin C(\ell,r)$ and $v$ is distinct from $v_i,v_j$ because, by definition, $v_i,v_j \in V_{i',j'}$.
Also notice that $v\in V_{i,j}$ implies $a(i,j) \leq a(v)$ and $b(v) \leq b(v_j)$.
By the second inequality, we get $b(v) \leq b(v_j) \leq b(v_{j'})$.
Suppose that $a(v) < a(i',j')$.
As we already know that $a(i,j) \leq a(v)$, we conclude that $a(i,j) < a(i',j')$ leading to a contradiction that $v_i,v_j \in V_{i',j'}$.
Thus we have $a(i',j') \leq a(v)$ and $b(v) \leq b(v_{j'})$, showing that $v \in V_{i',j'}$.
This means that $V_{i,j} \subset V_{i',j'}$, so that $V_{i,j} = V_{i',j'} \setminus C(\ell,r)$.

For the second statement, observe that if $\ell_{r} \leq a(x)$ then $x \in C(\ell,r)$.
Since $x \in V_{i,j}$, we conclude that $x \notin C(\ell,r)$ by the first statement.
Thus $a(x) < \ell_{r}$ holds, implying that $x$ crosses $(\ell,r)$.

With respect to the third statement, observe that no vertex of $B_j(r)$ belongs to the clique $K_{r}$.
This means that all vertices of $B_j(r)$ belong to both sets $V_{i,j}$ and $V_{i',j'}$.
Thus Lemma~\ref{lem:sepa2} and the first statement show that no two vertices $x\in V_{i,j} \setminus B_j(r)$ and $y \in  B_j(r)$ belong to the same cluster.

For the fourth statement, let $x$ be a vertex that crosses $(\ell,r)$. By the first statement we know that $x \in V_{i,j}$.
If $r < a(y)$ then $y\in B_j(r)$ and the third statement show that $x$ and $y$ do not belong to the same cluster.
Suppose that $\ell_{r} \leq a(y) \leq r$. If $r \leq b(y)$ then $y \in C(\ell,r)$ contradicting the fact that $y\in V_{i,j}$.
Putting together, we have $\ell_{r} \leq a(y) \leq b(y) < r$.
Now assume for contradiction that $x$ and $y$ belong to the same cluster $C_{xy}$.
By the fact that $a(x) < a(y)$, observe that $a(y) \leq \amin(C_{xy}) \leq \bmin(C_{xy}) \leq \min\{b(v_j),b(y)\}$.
We consider the graph induced by $V_{i',j'}$.
We show that there is a vertex of $C_{xy}$ that is $C(\ell,r)$-compatible and there is a vertex of $C(\ell,r)$ that is $C_{xy}$-compatible.
Notice that $x$ is $C(\ell,r)$-compatible, because $x$ crosses $(\ell,r)$ so that $x \in K_{r}$.
To see that there is a vertex of $C(\ell,r)$ that is $C_{xy}$-compatible,
choose $z$ to be the vertex of $C(\ell,r)$ having the smallest $a(z)$. 
This means that $a(z) = \ell_{r}$.
Then $z$ is adjacent to every vertex of $C_{xy}$ because $a(z) \leq a(y)$ and $b(y) < r \leq b(z)$.
Thus, $z \in C(\ell,r)$ is $C_{xy}$-compatible.
Therefore, Lemma~\ref{lem:compatible} shows the desired contradiction, implying that $x$ and $y$ do not belong to the same cluster.
\end{proof}

Notice that the number of admissible pairs $(\ell(j),r(j))$ for $v_j$ is polynomial because there are at most $n$ choices for each clique-index. 
Moreover, if $v_i\in M(j)$ then $\ell(j)=a(v_j)$. 
A pair of clique-indices $(\ell,r)$ with $\ell \leq r$ is called {\it bounding pair for $v_j$} if 
either $b(v_j) < r$ holds, or $v_j$ crosses $(\ell,r)$. 
Given an bounding pair $(\ell,r)$ for $v_j$, 
we write $(\ell(j),r(j)) < (\ell,r)$ to denote the set of bounding pairs $(\ell(j),r(j))$ for $v_j$ such that 
\begin{itemize}
\item $r(j) \leq b(v_j)$, whenever $b(v_j) < r$ holds, and
\item $r(j) < \ell$, otherwise.  
\end{itemize}
Observe that if $b(v_j)<r$ holds, then $(\ell(j),r(j)) < (\ell,r)$ describes all bounding pairs for $v_j$ with no restriction, regardless of $\ell$. 
On the other hand, if $\ell < a(v_j)$ and $r \leq b(v_j)$ hold, then $(\ell,r)$ is not a bounding pair for $v_j$.  
In fact, we will show that the latter case will not be considered in our partial subsolutions. 
For any admissible pair $(\ell(j),r(j))$ and any bounding pair $(\ell,r)$ for $v_j$, observe that $v_j \in C(\ell(j),r(j))$ and $v_j \notin C(\ell,r)$.
Intuitively, an admissible pair $(\ell(j),r(j))$ corresponds to the cluster containing $v_j$, whereas a bounding pair $(\ell,r)$ forbids $v_j$ to select certain vertices as they 
have already formed a cluster that does not contain $v_j$. 

Our task is to construct subsolutions over all admissible pairs for $v_j$ with the property that the vertices of $C(\ell(j),r(j))$ form a cluster. 
To do so, we consider a vertex $v_{j'}$ with $b(v_j) \leq b(v_{j'})$ and a cluster containing $v_{j'}$. 
Let $(\ell,r)$ be an admissible pair for $v_{j'}$ such that $a(v_j) \leq r \leq b(v_j)$.  
The previous results suggest to consider solutions in which the vertices of $C(\ell,r)$ form a cluster in an optimal solution. 
It is clear that if $\ell \leq a(v_j)$ then $v_j \in C(\ell,r)$. 
Moreover, if $b(v_j) < r$, then no vertex of $V_{i,j}$ belongs to $C(\ell,r)$.
Thus, we need to construct solutions for $A_{i,j}$, whenever $(\ell,r)$ is a bounding pair for $v_j$ 
and the vertices of $C(\ell,r)$ form a cluster. 
Such an idea is formally described in the following restricted solutions. 

Let $(\ell,r)$ be a bounding pair for $v_j$. We call the following solution, $(\ell,r)$-restricted solution:
\begin{itemize}
\item $A_{i,j}[\ell,r]$ is the value of an optimal solution for {\sc Cluster Deletion} of the graph $G[V_{i,j}] - \left(C(\ell,r) \cup B_j(r)\right)$ 
such that the vertices of $C(\ell,r)$ form a cluster.
\end{itemize}
Hereafter, we assume that $B_{j}(t)$ with $t \geq b(v_j)$ corresponds to an empty set. 
Figure~\ref{fig:recursion} illustrates a partition of the vertices with respect to $A_{i,j}[\ell,r]$. 
Notice that an optimal solution $A_{i,j}$ without any restriction is described in terms of $A_{i,j}[\ell,r]$ by $A_{i,j}[1,b(v_j)+1]$, since no vertex of $V_{i,j}$ belongs to $C(1,b(v_j)+1)$.
Therefore, $A_{1,n}[1,n+1]$ corresponds to the optimal solution of the whole graph $G$. 
As base cases, observe that if $V_{i,j}$ contains at most one vertex then $A_{i,j}[\ell,r]=0$ for all bounding pairs $(\ell,r)$, since there are no internal edges. 
For a set $C$, we write $|C|_2$ to denote the number ${{|C|}\choose 2}$. 
With the following result, we describe a recursive formulation for the optimal solution $A_{i,j}[\ell,r]$, which is our central tool for our dynamic programming algorithm. 

\begin{figure}[t]
\centering
\includegraphics[scale= 1.0]{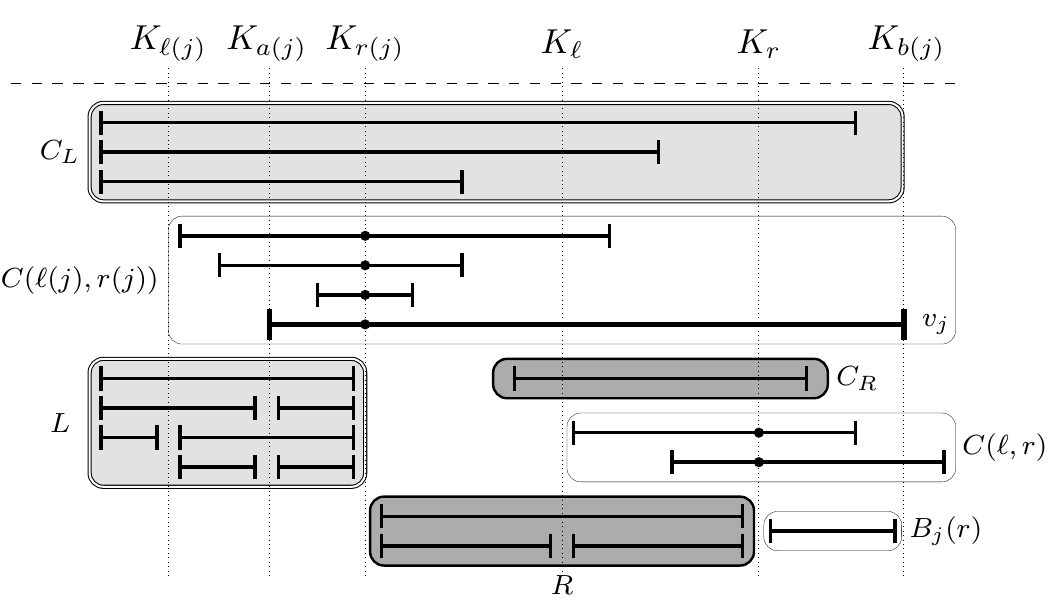}
\caption{A partition of the set of vertices given in $A_{i,j}[\ell,r]$, where $V_L = C_L \cup L$ and $V_R = C_R \cup R$. Observe that $B_{j}(r(j)) = R \cup C_R \cup \left(C\left(\ell,r\right) \cap V_{i,j}\right) \cup B_{j}(r)$.}
\label{fig:recursion}
\end{figure}

\begin{lemma}\label{lem:recurrence}
Let $(\ell,r)$ be a bounding pair for $v_j$. Then, 
$$
A_{i,j}[\ell,r] = \max_{(\ell(j),r(j))<(\ell,r)}
\left( A(V_L)[\ell(j),r(j)] + 
|C(\ell(j),r(j))|_2
+ A(V_R)[\ell,r] \right),
$$
where $V_L = V_{i,j} \setminus \left(C(\ell(j),r(j)) \cup B_j(r(j))\right)$ and 
$V_R = B_j(r(j)) \setminus \left(C(\ell,r) \cup B_{j}(r)\right)$. 
\end{lemma}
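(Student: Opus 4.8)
The statement is the optimal-substructure identity underlying the dynamic program, so the plan is to prove it as the two inequalities ``$\le$'' and ``$\ge$'', each obtained respectively by decomposing and by gluing cluster-deletion solutions along the vertex partition of Figure~\ref{fig:recursion}, in which $C(\ell(j),r(j))$ is the cluster containing $v_j$, the set $V_L$ collects everything to its left, and $V_R=B_j(r(j))\setminus(C(\ell,r)\cup B_j(r))$ is the part of $B_j(r(j))$ not yet committed by $(\ell,r)$. I would first record the set-theoretic facts that make this a genuine partition: $B_j(r(j))$ is exactly the set of vertices of $V_{i,j}$ lying strictly to the right of $K_{r(j)}$, $C(\ell(j),r(j))\subseteq K_{r(j)}$, and---using the defining conditions of $(\ell(j),r(j))<(\ell,r)$, and treating the cases $b(v_j)<r$ and ``$v_j$ crosses $(\ell,r)$'' separately---$(C(\ell,r)\cup B_j(r))\cap V_{i,j}\subseteq B_j(r(j))$; hence $V_{i,j}=V_L\uplus C(\ell(j),r(j))\uplus B_j(r(j))$ and, after removing the committed part, $V_{i,j}\setminus(C(\ell,r)\cup B_j(r))=V_L\uplus C(\ell(j),r(j))\uplus V_R$.

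For ``$\ge$'', I would fix an admissible pair $(\ell(j),r(j))<(\ell,r)$, take optimal solutions realizing $A(V_L)[\ell(j),r(j)]$ and $A(V_R)[\ell,r]$, and add $C(\ell(j),r(j))$ as an extra cluster. By the partition this is a cluster-deletion solution of $G[V_{i,j}]-(C(\ell,r)\cup B_j(r))$ in which $C(\ell(j),r(j))$ forms a cluster, and its number of internal edges equals $A(V_L)[\ell(j),r(j)]+|C(\ell(j),r(j))|_2+A(V_R)[\ell,r]$; maximizing over admissible pairs gives ``$\ge$''. The delicate point is that $A(V_L)[\ell(j),r(j)]$ and $A(V_R)[\ell,r]$ genuinely compute the restricted optima over $G[V_L]$ and $G[V_R]$ and not over some larger, non-consecutive vertex set: here I would apply Lemma~\ref{lem:ellrpair}(1) with $C(\ell(j),r(j))$ playing the role of the clustered set (so that $V_{i,j}\setminus C(\ell(j),r(j))$ is again of the form $V_{\cdot,\cdot}$), combine it with Observation~\ref{obs:subsetS} at clique-index $r(j)$, and check that $(\ell(j),r(j))$ is a bounding pair for the rightmost vertex of $V_L$ and $(\ell,r)$ one for the rightmost vertex of $V_R$, so that these restricting pairs prescribe exactly the vertices whose removal from the ground sets of $A(V_L)$ and $A(V_R)$ recovers $V_L$ and $V_R$.

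For ``$\le$'', I would start from an optimum $S$ realizing $A_{i,j}[\ell,r]$; since $(\ell,r)$ is a bounding pair for $v_j$, the vertex $v_j$ lies in $G[V_{i,j}]-(C(\ell,r)\cup B_j(r))$, so let $C^\star\in S$ be its cluster. By Lemmas~\ref{lem:lower}, \ref{lem:upper} and \ref{lem:existsMi1}---which between them force $C^\star\cap M(j)$ to equal some $M[t]$ and $C^\star\cap U(j)$ to be a suffix of $U[t]_{\le a}$---the cluster $C^\star$ equals $C(\ell(j),r(j))$ for a suitable admissible pair, with $r(j):=t$ and $\ell(j)$ read off the left endpoints occurring in $C^\star$; and using that $C^\star$ avoids the region already committed to $C(\ell,r)$ one checks $(\ell(j),r(j))<(\ell,r)$. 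Finally Lemma~\ref{lem:sepa2}, together with parts (3)--(4) of Lemma~\ref{lem:ellrpair}, shows that, apart from $C^\star$ itself, no cluster of $S$ meets both $V_L$ and $V_R$ or straddles $C^\star$; hence the restrictions of $S$ to $V_L$ and to $V_R$ are feasible for the subproblems defining $A(V_L)[\ell(j),r(j)]$ and $A(V_R)[\ell,r]$, so $|S|$ equals the internal edges of $S$ inside $V_L$, plus $|C^\star|_2$, plus the internal edges of $S$ inside $V_R$, which is at most $A(V_L)[\ell(j),r(j)]+|C(\ell(j),r(j))|_2+A(V_R)[\ell,r]$ and hence at most the right-hand side.

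I expect the main obstacle to be precisely the identification of $V_L$ and $V_R$ with the ground sets of $A(V_L)$ and $A(V_R)$ minus what their restricting pairs remove---that is, controlling how the ``crossing'' vertices and the leftover part of $B_j(r(j))$ interact with the two bounding pairs. This is where the hypothesis that $C(\ell,r)$ forms a cluster is indispensable: $A(S)$ only guarantees $S\subseteq V_{i',j'}$, so without the clustering hypothesis the ground set $V_{i',j'}$ could fail to be consecutive and the recursion would not close; with it, Lemma~\ref{lem:ellrpair} pins down which extra vertices are deleted by each restricting pair, and Observation~\ref{obs:subsetS} certifies consecutiveness on both sides of $K_{r(j)}$. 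The remaining work---counting internal edges and checking that the glued object and the two restrictions of $S$ are valid cluster-deletion solutions---is routine bookkeeping once the partition of Figure~\ref{fig:recursion} and these ground-set identities are in place.
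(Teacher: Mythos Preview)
Your proposal is correct and follows essentially the same route as the paper: both arguments identify the cluster of $v_j$ as some $C(\ell(j),r(j))$ via Lemmas~\ref{lem:lower}, \ref{lem:upper}, \ref{lem:existsMi1}, verify $(\ell(j),r(j))<(\ell,r)$ using Lemma~\ref{lem:ellrpair}(4), and then spend the bulk of the work on exactly the point you flag as the main obstacle---showing that the ground sets of $A(V_L)$ and $A(V_R)$, after removing what their restricting pairs prescribe, coincide with $V_L$ and $V_R$, and that $(\ell(j),r(j))$ and $(\ell,r)$ are bounding pairs for the respective rightmost vertices. The paper carries this out by a direct case analysis on the partitions $V_L=C_L\cup L$ and $V_R=C_R\cup R$ rather than by explicitly invoking Lemma~\ref{lem:ellrpair}(1), but the content is the same; your two-inequality framing is a perfectly good packaging of the same verification.
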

\begin{proof}
We first argue that $C(\ell(j),r(j))$ corresponds to the correct cluster $C$ containing $v_j$. 
Observe that $v_j \notin C(\ell,r)$, because $(\ell,r)$ is a bounding pair for $v_j$, so that $a(v_j) < \ell$ whenever $a(v_j) \leq r \leq b(v_j)$ holds.
By Lemmas~\ref{lem:upper} and \ref{lem:lower}, there are $r(j)=t$ and $\ell(j)=k$, where $a(v_j) \leq t \leq b(v_j)$ and $k=\amin(K_{t} \cap C)$, such that $C = C(\ell(j),r(j))$.
We show that such a set $C(\ell(j),r(j))$ is obtained from a correct choice among the described $(\ell(j),r(j))$. 
Assume first that $b(v_j) <r$.  
Then $A_{i,j}[\ell,r]=A_{i,j}$, because for every vertex $u$ of $C(\ell,r)$ we know the $b(v_j) < b(u)$, so that $V_{i,j} \cap C(\ell,r) = \emptyset$. 
This means that $a(v_j) \leq r(j) \leq b(v_j)$ for every bounding pair $(\ell(j),r(j))$, as described in the given formula. 
Now assume that $r \leq b(v_j)$. 
Since $v_j$ crosses $(\ell,r)$, Lemma~\ref{lem:ellrpair}~(4) shows that $v_j$ is not contained in a cluster with a vertex $y$ having $\ell < a(y)$. 
Thus, for any vertex $y \in C$ we know that $y \in K_t$ where $a(v_j) \leq t < \ell$. 
This means that there is a set $C(\ell(j),r(j))$ that contains exactly the vertices of $C$ such that $a(v_j) \leq r(j) < \ell$. 
Therefore, $(\ell(j),r(j)) < (\ell,r)$ holds, as desired.  

Next, we consider the sets $V_L$ and $V_R$. 
We show that $A(V_L)[\ell(j),r(j)]$ and $A(V_R)[\ell,r]$ correctly store the optimal values of each part. 
To do so, we show first that the vertex sets of each part correspond to the correct sets and, then, each pair $(\ell(j),r(j))$ and $(\ell,r)$ is indeed a bounding pair for the last vertex of $V_L$ and $V_R$, respectively.
We start with some preliminary observations. 
Notice that $B_{j}(r) \subseteq B_{j}(r(j))$, because $r(j) < r$, which means that every vertex $B_{j}(r)$ does not belong to $V_L \cup V_R$. 
Since $C(\ell(j),r(j))$ contains only vertices of $K_{r(j)}$ and $r(j)<\ell$, no vertex of $B_{j}(r)$ is considered in the described formula, as required in $A_{i,j}[\ell,r]$.
By the properties of $C(\ell(j),r(j))$ and $C(\ell,r)$, we have the following: 
\begin{itemize}
\item Let $x \in K_{r(j)} \cap V_{i,j}$. 
Then, either $x \in C(\ell(j),r(j))$ or $x$ crosses the pair $(\ell(j),r(j))$. 
Moreover, if a vertex $v$ crosses $(\ell(j),r(j))$ then $v \in V_L$. 
\item Let $y \in K_{r} \cap V_{i,j}$. 
Then, either $y \in C(\ell,r)$ or $y$ crosses the pair $(\ell,r)$. 
Moreover, if a vertex $v$ crosses $(\ell,r)$ but does not cross $(\ell(j),r(j))$ then $v \in V_R$. 
\end{itemize}
Let $C_L$ be the set of vertices of $V_{i,j}$ that cross $(\ell(j),r(j))$ and let $C_R$ be the set of vertices of $V_{i,j} \setminus C_L$ that cross $(\ell,r)$. 
The previous properties imply that we can partition $V_L$ to the vertices of $C_L$ and the vertices of $V_{i,j}$ that belong to $L=(K_{a(i,j)} \cup \cdots \cup K_{r(j)-1}) \setminus K_{r(j)}$. 
Similarly, $V_R$ is partitioned to the vertices of $C_R$ and the vertices of $V_{i,j}$ that belong to $R=(K_{r(j)+1} \cup \cdots \cup K_{r-1}) \setminus (K_{r(j)} \cup K_{r})$. 
See Figure~\ref{fig:recursion} for an exposition of the corresponding sets. 
Thus, we have the following partitions for $V_L$ and $V_R$: 
\begin{itemize}
\item $V_L = C_L \cup L$, where $L = \left((K_{a(i,j)} \cup \cdots \cup K_{r(j)-1}) \setminus K_{r(j)}\right) \cap V_{i,j}$.
\item $V_R = C_R \cup R$, where $R = \left((K_{r(j)+1} \cup \cdots \cup K_{r-1}) \setminus (K_{r(j)} \cup K_{r})\right)\cap V_{i,j}$.
\end{itemize}

Let $v_{i'},v_{j'}$ be the vertices of $V_L$ with $i' = \amin(V_L)$ and $j' = \bmax(V_L)$. 
We now show that $A(V_L)[\ell(j),r(j)]$ corresponds to the optimal solution of the graph 
$G[V_{i',j'}] - \left(B_{j'}(r(j)) \cup C(\ell(j),r(j))\right)$
such that the vertices of $C(\ell(j),r(j))$ form a cluster. 
Assume for contradiction that there is a vertex $x$ of $V_{i',j'} \setminus \left(C(\ell(j),r(j)) \cup B_{j'}(r(j)) \right)$ that does not belong to 
$V_L=V_{i,j} \setminus \left(B_{j}(r) \cup C(\ell,r)\right)$. 
First notice that $K_{r(j)} \cap V_{i,j} = C(\ell(j),r(j))$ if and only if $C_L$ is an empty set. 
In such a case, by Observation~\ref{obs:subsetS}, we have $V_{i',j'} = V_{i,j} \setminus \left(K_{r(j)} \cup \cdots \cup K_{b(j)}\right)$, contradicting the existence of such a vertex $x$. 
Suppose that $v_{i'} \neq v_i$.
Then $v_{i} \in M(j)$ or $v_{i} \in C(\ell(j),r(j))$, because $\min\{a(v_i),a(v_j)\}$ is the first maximal clique of all vertices of $V_{i,j}$. 
If $v_{i} \in M(j)$ then $U(j)=\emptyset$ and $\ell(j)=a(j)$. 
This means that for every $a(v_j) \leq r(j) \leq b(v_j)$, we have $K_{r(j)} \cap V_{i,j} = C(\ell(j),r(j))$, reaching a contradiction. 
If $v_{i} \in C(\ell(j),r(j))$ then $\ell(j)=a(v_i)$ and $C_L$ is empty, reaching again a contradiction. 
Suppose now that $i'=i$. 
It is clear that $x \neq v_{j'}$. If $v_{j'} \in L$ then $C_L = \emptyset$, so that $K_{r(j)} \cap V_{i,j} = C(\ell(j),r(j))$. 
Assume that $v_{j'} \in C_L$. 
Now observe that if $x \in L \cup C_L$, then $x$ is a vertex of $V_{i,j} \setminus \left(B_{j}(r) \cup C(\ell,r)\right)$. 
Thus, $x \notin L \cup C_L$. 
If $b(x) < r(j)$ then $x \in L$ because $a(v_i) \leq a(x)$. 
This means that $r(j) \leq b(x)$. 
If $\ell(j) \leq a(x) \leq r(j)$ then $x \in C(\ell(j),r(j))$, leading to a contradiction that $x \in V_L$, 
and if $a(x) < \ell(j)$ then $x \in C_L$, leading to a contradiction that $x \notin L \cup C_L$. 
Thus, we know that $r(j) < a(x)$ and $b(x) \leq b(v_{j'})$. This, however, implies that 
$x \in B_{j'}(r(j))$, reaching a contradiction to the fact that $x \in V_{i',j'} \setminus B_{j'}(r(j))$.
Therefore, we have shown that an optimal solution of the vertices of $V_{i',j'} \setminus \left(B_{j'}(r(j)) \cup C(\ell(j),r(j))\right)$ corresponds to an optimal solution of the vertices of $V_L$. 

Furthermore, we argue that $(\ell(j),r(j))$ is a bounding pair for $v_{j'}$ in $A(V_L)[\ell(j),r(j)]$. 
Assume that $r(j) \leq b(v_{j'})$. 
If $r(j) \leq a(v_{j'})$ then $v_{j'} \in B_{j}(r(j))$, because $a(v_j) \leq r(j)$. 
As $v_{j'} \in V_L$, we have $a(v_{j'}) < r(j) \leq b(v_{j'})$. 
Then, if $\ell(j) \leq a(v_{j'})$, we get $v_{j'} \in C(\ell(j),r(j))$, which implies that $a(v_{j'})<\ell(j)$, showing that $(\ell(j),r(j))$ is a bounding pair for $v_{j'}$.  
Assume next that $b(v_{j'}) < r(j)$. Then, $v_{j'} \notin C_L$, implying that $C_L=\emptyset$. 
Thus, for any value of $\ell(j)$ we know that $(\ell(j),r(j))$ is a bounding pair for $v_{j'}$.
Therefore, $A(V_L)[\ell(j),r(j)]$ corresponds to the optimal solution of the graph
$G[V_{i',j'}] - \left(B_{j'}(r(j)) \cup C(\ell(j),r(j))\right)$. 

Next we consider the vertices of $V_R$, in order to show that $A(V_R)[\ell,r]$ corresponds to an optimal solution of the graph $G[V_R]$. 
Let $v_{i''},v_{j''}$ be the vertices of $V_R$ with $i'' = \amin(V_R)$ and $j'' = \bmax(V_R)$.
Assume for contradiction that there is a vertex $x$ of $V_{i'',j''} \setminus \left(C(\ell,r) \cup B_{j''}(r)\right)$ that does not belong to
$V_R=B_j(r(j)) \setminus \left(C(\ell,r) \cup B_{j}(r)\right)$. 
Every vertex of $R \cup C_R$ belongs to $V_R$, so that $x \notin R \cup C_R$. 
This means that $b(x) > r$, since $x \notin R$, and $a(x) > r$, since $x \notin C_R \cup C(\ell,r)$. 
Then we obtain $r < a(x) \leq b(x) \leq b(v_{j''})$, showing that $x \in B_{j''}(r)$. 
Thus we reach a contradiction, because $B_{j''}(r) \subseteq B_{j}(r)$. 
Hence, the vertices described in $A(V_R)[\ell,r]$ correspond to the vertices of $V_R$, as desired.

With respect to $A(V_R)[\ell,r]$, it remains to show that $(\ell,r)$ is a bounding pair for $v_{j''}$. 
If $b(v_{j''}) < r$ then $C_R = \emptyset$, which means that $(\ell,r)$ is a bounding pair for $v_{j''}$. 
Next suppose that $r \leq b(v_{j''})$. If $r \leq a(v_{j''})$ then $v_{j''} \in B_{j}(r)$, contradicting the fact that $v_{j''} \in V_R$. 
Thus, we know that $a(v_{j''}) < r \leq b(v_{j''})$. 
If further $\ell \leq a(v_{j''})$, then $v_{j''} \in C(\ell,r)$, contradicting $v_{j''} \in V_R$. 
Hence, we conclude that $v_{j''}$ crosses $(\ell,r)$, showing that $(\ell,r)$ is indeed a bounding pair for $v_{j''}$.

To complete the proof, observe that no vertex of $V_L$ belongs to the same cluster with a vertex of $V_R$ by Lemma~\ref{lem:ellrpair}~(3). 
Thus, the optimal solutions described by $A(V_L)[\ell(j),r(j)]$ and $A(V_R)[\ell,r]$ do not overlap in $A_{i,j}[\ell,r]$. 
Therefore, the claimed formula holds. 
\end{proof}

Now we are ready to obtain our main result, namely a polynomial-time algorithm for {\sc Cluster Deletion} on interval graphs.

\begin{theorem}\label{theo:polyinterval}
{\sc Cluster Deletion} is polynomial-time solvable on interval graphs.
\end{theorem}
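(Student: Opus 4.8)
The plan is to turn the recursive identity of Lemma~\ref{lem:recurrence} into a dynamic programming algorithm whose table stores the quantities $A_{i,j}[\ell,r]$. First I would compute, in linear time, a clique path $\mathcal{P}=K_1\cdots K_p$ of $G$ together with the integers $a(v),b(v)$ for every vertex and the ordering $v_1,\ldots,v_n$ with $b(v_1)\le\cdots\le b(v_n)$. Every object appearing in the recurrence is then producible by a single scan over the at most $n$ maximal cliques: for a pair $v_i,v_j$ with $b(v_i)\le b(v_j)$ one reads off $V_{i,j}$, the sets $U(j),M(j)$ and the pieces $M[t],U[t],B_j(t)$, and, for any bounding or admissible pair for $v_j$, the value $\ell_r(j)$ and the clique $C(\ell(j),r(j))$; given a subset $S$ one identifies in $O(n)$ time the extremal vertices $v_{i'},v_{j'}$ with $i'=\amin(S)$, $j'=\bmax(S)$ that define $A(S)$.

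The table has one entry $A_{i,j}[\ell,r]$ for each pair $v_i,v_j$ with $b(v_i)\le b(v_j)$ and each bounding pair $(\ell,r)$ for $v_j$. Since there are $O(n^2)$ choices for $(i,j)$ and, each clique-index ranging over at most $n$ values, $O(n^2)$ choices for $(\ell,r)$, the table has polynomial size. I would fill it in order of nondecreasing $|V_{i,j}|$, using as base case $|V_{i,j}|\le 1$, where $A_{i,j}[\ell,r]=0$ for every bounding pair, together with the convention $A(\emptyset)=0$. For the inductive step I evaluate the formula of Lemma~\ref{lem:recurrence}: over all $O(n^2)$ admissible pairs $(\ell(j),r(j))<(\ell,r)$, I form $V_L$ and $V_R$, look up $A(V_L)[\ell(j),r(j)]$ and $A(V_R)[\ell,r]$, add $|C(\ell(j),r(j))|_2$, and take the maximum.

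Two facts make these look-ups legitimate, and both are already in place. First, by Observation~\ref{obs:subsetS} together with Lemma~\ref{lem:ellrpair}~(1) and the case analysis carried out inside the proof of Lemma~\ref{lem:recurrence}, the sets $V_L$ and $V_R$ are themselves of the form $V_{i',j'}$ for their own extremal vertices, so $A(V_L)$ and $A(V_R)$ really are table entries rather than solutions on an arbitrary subset, and $(\ell(j),r(j))$ and $(\ell,r)$ are bounding pairs for the last vertices of $V_L$ and $V_R$, so those entries have actually been defined. Second, since $v_j\in C(\ell(j),r(j))$ while $v_j\notin B_j(r(j))$, we have $v_j\notin V_L\cup V_R$, hence $|V_L|,|V_R|<|V_{i,j}|$; thus the recurrence refers only to strictly smaller subproblems and the chosen evaluation order is well founded. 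Correctness of the entire table then follows by induction on $|V_{i,j}|$ directly from Lemma~\ref{lem:recurrence}. Since $V_{1,n}=V(G)$ and no vertex of $V(G)$ lies in $C(1,n+1)$, the value $A_{1,n}[1,n+1]$ equals $|S(G)|$, the maximum number of internal edges; the minimum number of deleted edges is then $m-A_{1,n}[1,n+1]$, and a standard traceback through the recurrence recovers an optimal clustering itself.

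For the running time there are $O(n^4)$ table entries, each evaluated as a maximum over $O(n^2)$ candidate admissible pairs, with $O(n)$ work per candidate to assemble $V_L,V_R$ and the relevant clique, giving a crude bound of $O(n^7)$; any polynomial bound suffices, and this one can be improved with more careful bookkeeping. The part I expect to require the most care is precisely this bookkeeping of legitimacy — checking that every term on the right-hand side of Lemma~\ref{lem:recurrence} indexes an already-computed entry $A_{i',j'}[\cdot,\cdot]$, i.e.\ that the ad hoc sets $V_L,V_R$ coincide with $V_{i',j'}$-type sets and that the pairs they carry remain bounding pairs for the appropriate last vertices. This is exactly what Observation~\ref{obs:subsetS} and the internal case analysis of Lemma~\ref{lem:recurrence} were set up to deliver, so the remaining work is to assemble these ingredients and verify the evaluation order and the polynomial size bounds.
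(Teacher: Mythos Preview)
Your proposal is correct and follows essentially the same approach as the paper: a dynamic program with table entries $A_{i,j}[\ell,r]$ filled via the recurrence of Lemma~\ref{lem:recurrence}, with the answer read off from $A_{1,n}[1,n+1]$. The paper's proof is terser and quotes $O(n^6)$ rather than your $O(n^7)$ (it absorbs the per-candidate assembly into precomputation), but as you note any polynomial bound suffices, and your more explicit justification of well-foundedness and of the $V_L,V_R$ look-ups being genuine table entries is exactly what the paper leaves implicit.
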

\begin{proof}
We describe a dynamic programming algorithm that computes $A_{1,n}$ based on Lemma \ref{lem:recurrence}.
In a preprocessing step,
we first compute two orderings of the vertices according to their first $a(v)$ and last $b(v)$ maximal cliques.
Then we visit all vertices in ascending order with respect to $b(v_j)$ and for each such vertex $v_j$ we consider the vertices $v_i$ with $b(v_i) \leq b(v_j)$ in descending order with respect to $b(v_i)$. In such a way, we construct the sets $V_{i,j}$.
We use a table $\mathcal{T}[i,j,\ell,r]$ to store the values of each $A_{i,j}[\ell,r]$.
At the end, we output the maximum value of $\mathcal{T}[1,n,n+1,n+1]$ that corresponds to $A_{1,n}[n+1,n+1]$, as already explained. 
Regarding the running time, observe that the number of our table entries is at most $n^4$, as each table index is bounded by $n$.
Moreover, computing a single table entry requires $O(n^2)$ time, since we take the maximum of at most $(\ell,r)$ table entries.
Therefore, the overall running time of the algorithm is $O(n^6)$.
\end{proof}

\section{Cluster Deletion on a generalization of split graphs (split-twin graphs)}
A graph $G=(V,E)$ is a {\it split graph} if $V$ can be partitioned into a
clique $C$ and an independent set $I$, where $(C,I)$ is called a
{\it split partition} of $G$.
Split graphs are characterized as $(2K_2,C_4,C_5)$-free graphs \cite{FH77}. 
They form a subclass of the larger and
widely known graph class of {\it chordal graphs}, which are the graphs that do not contain
induced cycles of length $4$ or more as induced subgraphs.
In general, a split graph can have more than one split partition and computing such a partition can be done in linear time \cite{HS81}.

Hereafter, for a split graph $G$, we denote by $(C,I)$ a split partition of $G$ in which $C$ is a maximal clique.
It is known that {\sc Cluster Deletion} is polynomial-time solvable on split graphs \cite{BDM15}.
In fact, the algorithm given in \cite{BDM15} is characterized by its simplicity due to the following elegant characterization of an optimal solution:
if there is a vertex $v \in I$ such that $N(v) = C\setminus\{w\}$ and $w$ has a neighbor $v'$ in $I$ then the non-trivial clusters of an optimal solution are $C\setminus\{w\}\cup \{v\}$ and $\{w,v'\}$; otherwise, the only non-trivial cluster of an optimal solution is $C$ \cite{BDM15}.
Here we study whether such a simple characterization can be extended into more general classes of split graphs.
Due to Lemma~\ref{lem:truetwin}, it is natural to consider true twins at the independent set, as they are grouped together in an optimal solution and they are expected not to influence the solution characterization.
Surprisingly, we show that {\sc Cluster Deletion} remains NP-complete even on such a slight generalization of split graphs.
Before presenting our NP-completeness proof, let us first show that such graphs form a proper subclass of $P_5$-free chordal graphs.
We start by giving the formal definition of such graphs.

\begin{definition}\label{def:splittwin}
A graph $G=(V,E)$ is called {\it split-twin} graph if its vertex set can be partitioned into $C$ and $I$ such that $G[C]$ is a clique and the vertices of each connected component of $G[I]$ form true twins in $G$.
\end{definition}

\begin{figure}[t]
\centering
\includegraphics[scale= 1.0]{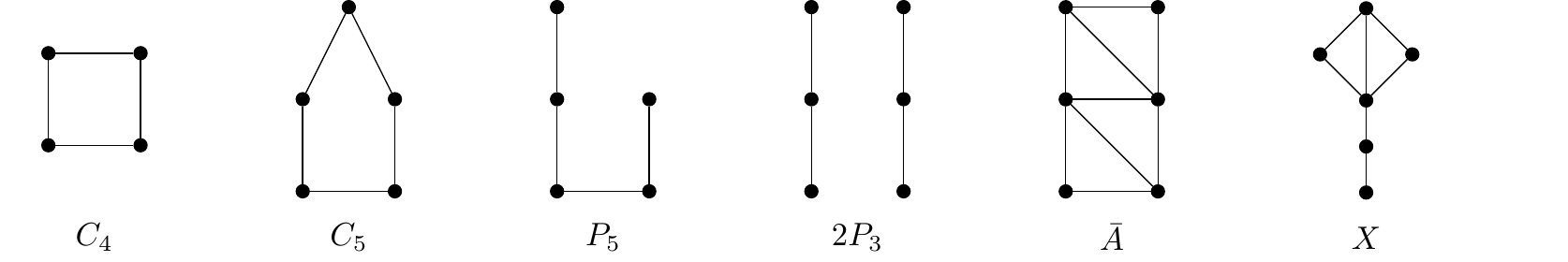}
\caption{The list of forbidden induced subgraph characterization for split-twin graphs.}\label{fig:forb}
\end{figure}

It is clear that in a split-twin graph $G$ the following holds:
(i) each connected component of $G[I]$ is a clique and forms a true-twin set in $G$, and
(ii) contracting the connected components of $G[I]$ results in a split graph, denoted by $G^*$.
Figure~\ref{fig:forb} illustrates the induced subgraphs that are forbidden in a split-twin graph.

\begin{proposition}\label{prop:forbiddensplittwin}
A graph $G$ is split-twin if and only if it does not contain any of the graphs $C_4,C_5,P_5,2P_3,\bar{A},X$ as induced subgraphs.
\end{proposition}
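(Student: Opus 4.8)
The plan is to prove the forbidden-induced-subgraph characterization of split-twin graphs in the two standard directions, with the ``only if'' direction being essentially a routine check and the ``if'' direction carrying the real content.

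\textbf{Easy direction.} First I would verify that none of $C_4, C_5, P_5, 2P_3, \bar A, X$ is split-twin. For each candidate $H$ one argues that no partition of $V(H)$ into a clique part $C$ and an ``independent-with-true-twin-components'' part $I$ can exist. For $C_4, C_5, P_5$ this is immediate since these are already excluded from chordal graphs ($C_4,C_5$) or, in the case of $P_5$, one checks directly that any attempt to place its vertices forces a vertex of $I$ with a private neighbour, contradicting the true-twin condition (the component of $G[I]$ containing that vertex would not be a true-twin set in $G$). For $2P_3$, $\bar A$, $X$ one lists the few possible clique parts $C$ (each $C$ is contained in a maximal clique, and these graphs have small maximal cliques) and shows that in every case two vertices lying in the same connected component of the remaining graph $G[I]$ fail to have the same closed neighbourhood in $G$. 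This is a finite case analysis; I would present it compactly, perhaps noting that it suffices to observe that each of these graphs contains an induced subgraph on $\le 5$ vertices that already obstructs any split-twin partition.

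\textbf{Hard direction.} Conversely, suppose $G$ contains none of the listed graphs as an induced subgraph; I must produce a split-twin partition. Since $C_4, C_5$ are forbidden, $G$ is chordal, and since $P_5$ is forbidden, $G$ is $P_5$-free chordal. The natural strategy is: let $C$ be a maximum clique of $G$ (or a maximal clique chosen by some rule) and set $I = V(G) \setminus C$; then show (a) $G[I]$ has each connected component a true-twin set in $G$, using the absence of $2P_3, \bar A, X$. Concretely, I would first argue that $G[I]$ is a disjoint union of cliques: if some three vertices of $I$ induced a $P_3$, then together with appropriately chosen vertices of $C$ (guaranteed to exist because $C$ is a maximal/maximum clique, so every vertex of $I$ misses at least one vertex of $C$) one would build an induced $P_5$, $2P_3$, or one of $\bar A, X$ — this is where the forbidden list is exactly calibrated. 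Then, for two adjacent vertices $x,y$ in the same component of $G[I]$, I would show $N[x]=N[y]$: if some $c \in C$ were adjacent to $x$ but not $y$ (or vice versa), then $x,y,c$ plus a further vertex of $C$ non-adjacent to one of them, plus possibly a vertex witnessing the component structure, yields a forbidden subgraph, in particular $\bar A$ or $X$, which are presumably designed precisely to encode ``a true twin pair in $I$ that a clique vertex splits.''

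\textbf{Main obstacle.} The delicate point is the \emph{choice of the clique part $C$}: a split-twin graph may have several partitions, and an arbitrary maximal clique need not work (a vertex of $I$ might be adjacent to all of $C$ and hence should have been put in $C$). So I expect the core of the argument to be a careful selection — e.g., take $C$ to be an \emph{inclusion-wise maximal} clique such that $G - C$ still has the property that each component is a true-twin set, or iteratively move into $C$ any vertex of $I$ adjacent to all current clique vertices — and then to prove this process terminates at a genuine split-twin partition, the obstruction to which is always one of the six listed graphs. I would organize this as: (1) $G$ chordal and $P_5$-free; (2) pick $C$ a maximal clique maximizing $|C|$; (3) prove $G[I]$ is a cluster graph, charging any induced $P_3$ in $I$ to a forbidden subgraph; (4) prove each $G[I]$-component is homogeneous (true twins) by the same charging, now invoking $\bar A$ and $X$; (5) conclude. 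The bookkeeping in steps (3)–(4) — enumerating which clique vertices are available as ``witnesses'' and checking the resulting five-or-more-vertex configuration is on the forbidden list — is the part that needs the most care, and is presumably what the figure of forbidden graphs is there to make transparent.
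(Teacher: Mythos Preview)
Your easy direction is fine and matches the paper's in spirit (the paper notes that none of the six graphs has a pair of true twins, so a split-twin partition would have to be a split partition, which fails since each graph is either $C_4$, $C_5$, or contains an induced $2K_2$).

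For the hard direction your plan is genuinely different from the paper's, and the comparison is worth recording. You propose to \emph{fix a clique} $C$ (maximum, or refined iteratively) and then argue directly that $I=V(G)\setminus C$ decomposes into true-twin components, charging any failure to one of the six forbidden graphs. The paper instead runs a \emph{contraction} argument that never commits to a clique: since split graphs are exactly the $(2K_2,C_4,C_5)$-free graphs, an $F$-free non-split graph must contain an induced $2K_2$ with edges $x_1x_2$ and $y_1y_2$; a local six-vertex case analysis (adjoin a vertex $a$ distinguishing $x_1,x_2$ and a vertex $b$ distinguishing $y_1,y_2$) shows that if \emph{neither} edge had true-twin endpoints then $\{x_1,x_2,y_1,y_2,a,b\}$ would induce one of $C_4,P_5,2P_3,\bar A,X$. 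Hence every induced $2K_2$ contributes a true-twin pair; contracting it and iterating eliminates all $2K_2$'s, leaving a split graph $G^*$, so $G$ is split-twin.

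The paper's route buys exactly what you flagged as the ``main obstacle'': it never has to choose the right clique. And that obstacle is not hypothetical --- taking $C$ maximum can fail. Take a triangle $c_1c_2c_3$, attach a true-twin pair $\{a_1,a_2\}$ with neighbourhood $\{c_1,c_2\}$ and a true-twin pair $\{b_1,b_2\}$ with neighbourhood $\{c_2,c_3\}$. This is split-twin via $C=\{c_1,c_2,c_3\}$, but the maximum clique $C'=\{c_1,c_2,a_1,a_2\}$ leaves $I'=\{c_3,b_1,b_2\}$ as a single component that is \emph{not} a true-twin set ($c_1\in N[c_3]\setminus N[b_1]$). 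So your step~(2) would need a subtler selection rule before steps~(3)--(4) can go through, whereas the paper's $2K_2$-based argument is entirely local and sidesteps the issue.
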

\begin{proof}
Let $F$ be the list of such subgraphs, i.e., $F=\{C_4,C_5,P_5,2P_3,\bar{A},X\}$. We show that split-twin graphs are exactly the $F$-free graphs.
It is clear that any subgraph of $F$ does not contain true twins.
Moreover, besides $C_4$ and $C_5$, each of the rest of the subgraphs contains an induced $2K_2$,
which implies that all such subgraphs of $F$ are not split-twin graphs.
Thus, if a graph $G$ contains one of the subgraphs of $F$ then $G$ is not a split-twin graph.

We show that any $F$-free graph $G$ is split-twin.
If $G$ is a split graph then, by definition, $G$ is split-twin.
Assume that $G$ is not a split graph.
Since $G$ does not contain $C_4$ or $C_5$ and split graphs are exactly the $(2K_2,C_4,C_5)$-free graphs, there is an induced $2K_2$ in $G$.
Let $x_1x_2$ and $y_1y_2$ be the two edges of an induced $2K_2$.
We show that the endpoints of at least one of the two edges are true twins.
Assume for contradiction that neither $x_1,x_2$ nor $y_1,y_2$ are true twins in $G$.
Let $a$ be a neighbor of $x_1$ that is non-adjacent to $x_2$, and let $b$ be a neighbor of $y_1$ that is non-adjacent to $y_2$.
We show that the vertices of $\{a,x_1,x_2,b,y_1,y_2\}$ induce one of the subgraphs of $F$, contradicting the fact that no pair of vertices form true twins.
If $b \notin N(\{x_1,x_2\})$ and $a \notin N(\{y_1,y_2\})$ then there is an induced $P_5$ or $2P_3$ depending on whether $a$ and $b$ are adjacent or not.
Thus, $b \in N(\{x_1,x_2\})$ or $a \in N(\{y_1,y_2\})$.
Observe that if $a$ is adjacent to at least one of $y_1$ or $y_2$ then $a$ is adjacent to both $y_1$ and $y_2$; otherwise, $\{x_1,x_2,a,y_1,y_2\}$ induce a $P_5$.
By symmetric arguments we know that either $b$ is adjacent to both $x_1,x_2$ or to none.
Without loss of generality, assume that $bx_1, bx_2 \in E(G)$.
\begin{itemize}
\item Suppose that $a$ and $b$ are non-adjacent.
If $a \notin N(\{y_1,y_2\})$ then there is a $P_5$ induced by $\{a, x_1, b, y_1, y_2\}$.
Moreover, by the previous argument, we know that if $a \in N(\{y_1,y_2\})$ then $ay_1, ay_2 \in E(G)$, which implies a $C_4$ in $G$ induced by $\{a,x_1,b,y_1\}$.
Thus if $ab \notin E(G)$ we obtain a induced subgraph of $F$.
\item Suppose that $a$ and $b$ are adjacent. If $a \notin N(\{y_1,y_2\})$, then all six vertices induce an $X$ graph.
Otherwise, we know that $ay_1, ay_2 \in E(G)$, showing that all six vertices induce a graph $\bar{A}$, where $a$ and $b$ are the degree four vertices.
\end{itemize}
Thus in all cases we obtain an induced subgraph of $F$, reaching to a contradiction that $G$ being an $F$-free graph.
This means that for any $2K_2$ we know that at least one of the two edges contains true twin vertices in $G$.
By iteratively picking such true twins and contracting them into a new vertex, results in a graph $G^*$ that does not contain $2K_2$.
Therefore $G^*$ is a split graph, implying that $G$ is a split-twin graph.
\end{proof}

Thus by Proposition~\ref{prop:forbiddensplittwin}, split-twin graphs form a proper subclass of $P_5$-free chordal graphs, i.e., of $(C_4,C_5,P_5)$-free graphs.
Now let us show that decision version of {\sc Cluster Deletion} is NP-complete on split-twin graphs.
For the reduction we will use the NP-hard {\sc Edge Weighted Cluster Deletion} problem.
In the {\sc Edge Weighted Cluster Deletion} problem, each edge of the input graph is associated with a weight and
the objective is to construct a clustered graph having the maximum total (cumulative) weight of edges.
It is known that {\sc Edge Weighted Cluster Deletion} remains NP-hard on split graphs even when (i) all edges inside the clique have weight one, (ii) all edges incident to a vertex $w \in I$ have the same weight $q$, and (iii) $q=|C|$ \cite{BDM15}.
We abbreviate the latter problem by EWCD and denote by $(C,I,k)$ an instance of the problem where $(C,I)$ is a split partition of the vertices of $G$ and $k$ is the total weight of the edges in a cluster solution for $G$.

\begin{theorem}\label{theo:npsplittwin}
The decision version of {\sc Cluster Deletion} is NP-hard on split-twin graphs.
\end{theorem}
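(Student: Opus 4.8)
The plan is to reduce from EWCD, the NP-hard version of {\sc Edge Weighted Cluster Deletion} on split graphs in which every edge inside the clique $C$ has weight $1$ and every edge between $I$ and $C$ has weight $q=|C|$. The idea is to simulate a weight-$q$ edge incident to a vertex $w\in I$ by replacing $w$ with a set $W_w$ of $q$ pairwise-adjacent true twins, each complete to $N_G(w)$. Since true twins are forced into the same cluster (Lemma~\ref{lem:truetwin}) and no edges run between distinct sets $W_w,W_{w'}$, a cluster meeting $W_w$ is exactly $W_w\cup S_w$ for some clique $S_w\subseteq N_G(w)$; such a cluster contributes $\binom{q}{2}+q|S_w|+\binom{|S_w|}{2}$ internal edges in the new graph, versus weight $q|S_w|+\binom{|S_w|}{2}$ for the corresponding EWCD cluster $\{w\}\cup S_w$. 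The surplus $\binom{q}{2}$ is the same whether or not $S_w$ is empty, hence it becomes a single global additive constant $|I|\binom{q}{2}$.

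Concretely, given an EWCD instance $(C,I,k)$ I would build $G'=(V',E')$ with $V'=C\cup\bigcup_{w\in I}W_w$, where each $W_w$ is a fresh set of $q=|C|$ vertices, and $E'$ makes $C$ a clique, each $W_w$ a clique, and $W_w$ complete to $N_G(w)$; then set $k'=k+|I|\binom{q}{2}$. First I would check that $G'$ is split-twin: $(C,\bigcup_{w}W_w)$ is a partition with $G'[C]$ a clique, each $W_w$ is a connected component of $G'[\bigcup_{w}W_w]$, and all vertices of $W_w$ share the closed neighborhood $W_w\cup N_G(w)$, so they form a true-twin set in $G'$; thus $G'$ meets Definition~\ref{def:splittwin}. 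The construction is polynomial because $q=|C|\le |V(G)|$, so $|V'|\le |V(G)|+|V(G)|^2$.

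For correctness I would argue both directions by the counting sketched above. In the forward direction, any EWCD solution of weight at least $k$ has every cluster either contained in $C$ or equal to $\{w\}\cup S_w$ with $w\in I$ and $S_w\subseteq N_G(w)$ (since $I$ is independent and $C$ a clique); replacing each $w$ by $W_w$ yields a clustering of $G'$ whose internal-edge count is exactly the EWCD weight plus $|I|\binom{q}{2}\ge k'$. In the backward direction, a clustering of $G'$ with at least $k'$ internal edges keeps each $W_w$ inside one cluster by Lemma~\ref{lem:truetwin}, and since clusters are cliques, the $W_w$'s are mutually non-adjacent, and $N_{G'}(W_w)\setminus W_w=N_G(w)\subseteq C$, every cluster meeting $\bigcup_w W_w$ has the form $W_w\cup S_w$ with $S_w\subseteq N_G(w)$, and these $S_w$ are pairwise disjoint; contracting each $W_w$ back to $w$ then gives an EWCD clustering of weight equal to the internal-edge count minus $|I|\binom{q}{2}\ge k$. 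The passage between "internal edges" and "number of deleted edges" is immediate because the total (weighted) number of edges is fixed, which yields the reduction for the decision version.

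The only real obstacle is pinning down the structural claim that every cluster of $G'$ meeting some $W_w$ is exactly $W_w\cup S_w$ with $S_w\subseteq N_G(w)$, with the $S_w$ pairwise disjoint, since this is what makes the offset $|I|\binom{q}{2}$ exact rather than a one-sided bound; but it follows cleanly from Lemma~\ref{lem:truetwin} together with the facts that a cluster is a clique, the sets $W_w$ are mutually non-adjacent, and the external neighborhood of each $W_w$ lies in $C$. Everything else is elementary bookkeeping with binomial coefficients.
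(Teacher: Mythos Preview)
Your proposal is correct and follows essentially the same approach as the paper's own proof: the same reduction from EWCD on split graphs, the same construction replacing each $w\in I$ by a clique $W_w$ of $q=|C|$ true twins, and the same offset $k'=k+|I|\binom{q}{2}$. Your treatment of the backward direction is in fact slightly more explicit than the paper's in spelling out why every cluster meeting some $W_w$ has the form $W_w\cup S_w$ with the $S_w$ pairwise disjoint.
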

\begin{proof}
We prove the NP-hardness of the {\sc Cluster Deletion} problem on split-twin graphs
by giving a polynomial reduction from restricted version EWCD of {\sc Edge Weighted Cluster Deletion} on split graphs which is known to be NP-hard \cite{BDM15}.
Let $(C,I,k)$ be an instance of EWCD, where $G=(C \cup I, E)$ is a split graph.
From $G$, we build a split-twin graph $G'=(C' \cup I', E')$ by keeping the same clique $C'=C$, and for every vertex $w_j \in I$ we apply the following:
\begin{itemize}
\item We replace $w_j$ by $q=|C|$ true twin vertices $I'_j$ (i.e., by a $q$-clique) such that for any vertex $w' \in I'_j$ we have $N_{G'}(w')= N_G(w_j) \cup (I'_j \setminus \{w'\})$.
That is, their neighbors outside $I'_j$ are exactly $N_G(w_j)$.
Moreover, the set of vertices $I'_1, \ldots, I'_{|I|}$ form $I'$.
\end{itemize}
By the above construction, it is not difficult to see that $G'$ is a split-twin graph, since the graph induced by $I'$ is a disjoint union of cliques and two adjacent vertices of $I'$ are true twins in $G'$.
Also observe that the construction takes polynomial time because $q$ is at most $n = |V(G)|$.
We claim that there is an edge weighted cluster solution for $G$ with total weight at least $k$
if and only if there is a cluster solution for $G'$ having at least $k+|I|\cdot{{q}\choose 2}$ edges.

Assume that there is a cluster solution $S$ for $G$ with total weight at least $k$.
From $S$, we construct a solution $S'$ for $G'$ having the desired number of edges.
There are three types of clusters in $S$:
\begin{itemize}
\item[(a)] Cluster formed only by vertices of the clique $C$, i.e., $Y \in S$, where $Y \subseteq C$. We keep such clusters in $S'$. We denote by $t_a$ the total weight of clusters of type (a). Notice that since the weight of edges having both endpoints in $C$ are all equal to one, $t_a$ corresponds to the number of edges in $Y$.
\item[(b)] Cluster formed only by one vertex $w_j \in I$, i.e., $\{w_j\} \in S$. In $S'$ we replace such cluster by the corresponding clique $I'_j$ having exactly ${q \choose 2}$ edges. It is clear that total weight of such clusters do not contribute to the value of $S$.
\item[(c)] Cluster formed by the vertices $y_1, \ldots, y_p,w_j$, where $y_i \in C$ and $w_j \in I$. As the weights of the edges between the vertices of $y_i$ is one, the total number of weights in such a cluster is ${p \choose 2} + p\cdot q$. Let $t_c$ be the total weight of clusters of type (c). In $S'$ we replace $w_j$ by the vertices of $I'_j$ and obtain a cluster $S'$ having ${p \choose 2} + p\cdot q + {q \choose 2}$ number of edges.
\end{itemize}
Now observe that in $S$ we have $t_a + t_c$ total weight, which implies $t_a+t_c \geq k$. Thus, in $S'$ we have at least $t_a+t_c+|I|\cdot{{q}\choose 2}$ edges, giving the desired bound.

For the opposite direction, assume that there is a cluster solution $S'$ for $G'$ having at least $k+|I|\cdot{{q}\choose 2}$ edges.
All vertices of $I'_j$ are true twins and, thus, by Lemma~\ref{lem:truetwin} we know that they belong to the same cluster in $S'$.
Thus, any cluster of $S'$ has one of the following forms:
(i) $Y'$, where $Y' \subseteq C'$,
(ii) $I'_j$,
(iii) $I'_j \cup \{y'_1, \ldots, y'_p\}$, where $y'_i \in C'$.
This means that all internal edges having both endpoints in $I'$ contribute to the value of $S'$ by $|I|\cdot{{q}\choose 2}$.
Moreover, observe that for any internal edge of $S'$ of the form $y'w'$ with $y'\in C'$ and $w' \in I'_j$, we know that there are exactly $q$ internal edges incident to $y'$ and the $q$ vertices of $I'_j$. Thus such internal edges $y'w'$ of $S'$ correspond to exactly one internal edge $yw_j$ of $S$ having weight $q$ where $y=y'$ (because $C=C'$) and $w_j$ is the vertex of $I$ associated with $I_j$.
Hence, all internal edges outside each $I'_j$ in $S'$ correspond to either a weighted internal edge in $S$ or to the same unweighted edge of the clique $C$ in $S$.
Therefore, there is an edge weighted solution $S$ having weight at least $k$.
\end{proof}

\subsection{Polynomial-time algorithms on subclasses of split-twin graphs}
Due to the hardness result given in Theorem~\ref{theo:npsplittwin}, it is natural to consider subclasses of split-twin graphs related to their analogue subclasses of split graphs.
We consider two such subclasses. One of them corresponds to the split-twin graphs such that the vertices of $I$ have no common neighbor in the clique, unless they are true or false twins.
The other one corresponds to threshold graphs (i.e., split graphs in which the vertices of the independent set have nested neighborhood) and form the split-twin graphs in which the vertices of $I$ have a nested neighborhood.
We formally define such graphs and give polynomial-time algorithms for {\sc Cluster Deletion} on both graph classes.
For a vertex $x \in I$ we write $N_{C}(x)$ to denote the set $N(x) \cap C$.

\begin{definition}
A split-twin graph $G$ with partition $(C,I)$ on its vertices is called {\it 1-split-twin} graph if for any two vertices $x,y \in I$,
either $N_C(x) \cap N_C(y) =\emptyset$ or $N_C(x) = N_C(y)$.
\end{definition}

It is not difficult to see that in a 1-split-twin graph, any two vertices of $I$ having a common neighbor in $C$ have exactly the same neighborhood in $C$.

\begin{theorem}\label{theo:onesplittwin}
{\sc Cluster Deletion} is polynomial-time solvable on 1-split-twin graphs.
\end{theorem}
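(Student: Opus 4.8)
The plan is to exploit the structure that in a $1$-split-twin graph the vertices of $I$ partition into groups according to their neighborhood in $C$: each group $G_1, \ldots, G_s$ consists of all vertices of $I$ whose neighborhood in $C$ is some fixed set $N_j \subseteq C$, and distinct groups have disjoint neighborhoods in $C$ (by the $1$-split-twin property, $N_C(x)\cap N_C(y)=\emptyset$ or $N_C(x)=N_C(y)$). Within a single group $G_j$, the true-twin components of $G[I]$ further refine $G_j$ into cliques $T_{j,1},\ldots,T_{j,t_j}$ that are pairwise non-adjacent (false twins to each other, since they share the same $N_C$). By Lemma~\ref{lem:truetwin} each $T_{j,h}$ lies entirely in one cluster of an optimal solution, and by Lemma~\ref{lem:falsetwin}, since $N_C(x)=N_j$ is a clique, we may assume that at most one of the cliques $T_{j,1},\ldots,T_{j,t_j}$ is merged with clique-vertices while all others form trivial clusters — more precisely, we pick the largest such clique $T_j$ as the only candidate in its group to be attached to $C$. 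First I would establish this reduction, so that each group $G_j$ is represented by a single weighted vertex $w_j$ with weight $|T_j|$ and neighborhood $N_j$ in $C$, and the disjointness $N_j\cap N_{j'}=\emptyset$ for $j\ne j'$ is the key structural fact.

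Next I would analyze the form of an optimal solution on this reduced weighted graph. Since the $N_j$'s are pairwise disjoint, no two distinct group-representatives $w_j, w_{j'}$ can lie in the same cluster (a cluster is a clique, and $w_j w_{j'}\notin E$; and even a clique-vertex cannot be adjacent to both unless it lies in $N_j\cap N_{j'}=\emptyset$). Hence every non-trivial cluster of an optimal solution is either (a) a subset of $C$, or (b) of the form $Q \cup T_j$ for a single group $j$, where $Q \subseteq N_j$. Moreover, each clique vertex $c\in C$ belongs to at most one such cluster, and if $c\in N_j$ and $c$ is placed with group $j$ then $c$ contributes to that cluster; otherwise $c$ stays in some subset-of-$C$ cluster. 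Because the groups have disjoint neighborhoods in $C$, the only interaction among groups is through the partition of $C$ itself. This suggests the following combinatorial picture: choosing, for each group $j$, a subset $Q_j\subseteq N_j$ to attach to $T_j$ (possibly empty), where the chosen $Q_j$'s need not be disjoint across groups only in that a clique vertex assigned to group $j$ is removed from $C$'s remaining partition; and then optimally partitioning the leftover clique vertices $C\setminus\bigcup_j Q_j$ into cliques — which, being a clique, is trivially one cluster.

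The main work, then, is an exchange/optimization argument showing that an optimal solution uses at most one group with $Q_j\neq\emptyset$, or alternatively that the global optimum can be found by a simple enumeration. I expect the structure mirrors the split-graph characterization from \cite{BDM15}: since the internal-edge count for attaching $Q_j$ to $T_j$ against leaving those $|Q_j|$ vertices in the clique $C$ is $\binom{|Q_j|}{2}+|Q_j|\cdot|T_j|$ versus the edges lost inside $C$ (which is $\binom{|C|}{2}-\binom{|C\setminus Q_j|}{2}$ plus cross-edges), a convexity argument over the sizes should show that at most one group is "worth" detaching from $C$, and for that group $Q_j$ should be taken as large as possible, namely $Q_j=N_j$ or $Q_j=N_j\setminus\{c\}$ for the single vertex $c$ with an outside private neighbor — exactly paralleling the two cases of the split-graph solution. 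The hardest part will be making this convexity/exchange argument precise in the weighted setting (with group weights $|T_j|$ possibly large), i.e. ruling out configurations where two or more groups each peel off a few clique vertices; I would handle it by Lemma~\ref{lem:compatible}-style swaps, moving clique vertices between the two candidate clusters to contradict optimality. Once that is shown, the algorithm is: for each group $j$ and each of the $O(n)$ choices of $Q_j\in\{N_j,\, N_j\setminus\{c\}\}$ (over the $c\in N_j$ having a private neighbor in $I$), compute the resulting solution value in $O(n^2)$ time and output the best; total time $O(n^3)$, which is polynomial.
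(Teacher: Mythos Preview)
Your reduction step is sound and essentially matches the paper: group $I$ by $N_C$-neighborhood, use Lemma~\ref{lem:truetwin} to keep each true-twin clique together, and use a false-twin argument to keep only the largest clique $T_j$ in each group as the candidate to merge with $C$. The paper goes one step further here that you miss: because any two clique-vertices sharing a neighbor in $I$ are true twins in $G$ (a direct check from the $1$-split-twin definition), each $N_j$ is itself a true-twin class of $C$. Hence by Lemma~\ref{lem:truetwin} the choice is binary, $Q_j\in\{\emptyset,N_j\}$, not $Q_j\in\{N_j,\,N_j\setminus\{c\}\}$.

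The substantive gap is your conjecture that an optimal solution detaches at most one group from $C$. This is false. Take $C=\{c_1,c_2,c_3\}$, let $I_1$ be ten true twins adjacent only to $c_1$, and $I_2$ ten true twins adjacent only to $c_2$. Detaching both groups gives clusters $\{c_1\}\cup I_1$, $\{c_2\}\cup I_2$, $\{c_3\}$ with $\binom{11}{2}+\binom{11}{2}+0=110$ internal edges; detaching only one gives $\binom{11}{2}+\binom{2}{2}+\binom{10}{2}=101$; detaching none gives $\binom{3}{2}+2\binom{10}{2}=93$. So the split-graph characterization does not transfer, and your enumeration over a single group would return a suboptimal answer.

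What the paper does instead: after the reduction, the problem is to choose a subset $M$ of the matched pairs $(C_i,I_i)$ to merge (each $i\in M$ contributes a cluster $C_i\cup I_i$), while the remaining $C_j$'s with $j\notin M$ form a single cluster and each remaining $I_j$ its own cluster. The key structural lemma is a monotonicity: in an optimal $M$, if $(C_j,I_j)\notin M$ and $|C_i|+|I_i|\le |C_j|+|I_j|$ then $(C_i,I_i)\notin M$ as well. Hence after sorting the pairs by $|C_i|+|I_i|$, the optimal $M$ is a prefix, and a one-dimensional dynamic program (or even a linear scan over the $k+1$ possible prefixes) finds the optimum in linear time. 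Your convexity intuition is pointing in the right direction, but it should be used to prove this prefix property for \emph{any number} of detached groups, not to argue that at most one group detaches.
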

\begin{proof}
Let $G$ be a 1-split-twin graph with partition $(C,I)$.
First observe that if $G$ is disconnected then $I$ contains isolated cliques, i.e., true twins having no neighbor in $C$.
Thus we can restrict ourselves to a connected graph $G$, since by Lemma~\ref{lem:truetwin} each isolated clique is contained in exactly one cluster of an optimal solution.
We now show that all vertices of $C$ that have a common neighbor in $I$ are true twins.
Let $u$ and $v$ be two vertices of $C$ such that $x \in N(u) \cap N(v) \cap I$.
All vertices of $C \setminus \{u,v\}$ are adjacent to both $u$ and $v$.
Assume that there is a vertex $y \in I$ that is adjacent to $u$ and non-adjacent to $v$.
If $xy \in E(G)$ then by the definition of split-twin graphs $x$ and $y$ are true twins which contradicts the assumption of $xv\in E(G)$ and $yv\notin E(G)$.
Otherwise, $x$ and $y$ are non-adjacent and since $N_C(x) \cap N_C(y) \neq \emptyset$ we reach a contradiction to the definition of 1-split-twin graphs.
Thus, all vertices of $C$ that have a common neighbor in $I$ are true twins.

We partition the vertices of $C$ into true twin classes $C_1, \ldots, C_k$,
such that each $C_i$ contains true twins of $C$. 
From the previous discussion, we know that any vertex of $I$ is adjacent to all the vertices of exactly one class $C_i$;
otherwise, there are vertices of different classes in $C$ that have common neighbor.
For a class $C_i$, we partition the vertices of $N(C_i) \cap I$ into true twin classes $I_i^{1}, \ldots, I_i^{q}$ such that $|I_i^{1}|\geq \cdots \geq |I_i^{q}|$.

We claim that in an optimal solution $S$, the vertices of each class $I_i^{j}$ with $j\geq 2$ constitute a cluster.
To see this, observe first that the vertices of $I_i^{j}$, $1\leq j \leq q$, are true twins, and by Lemma~\ref{lem:truetwin} they all belong to the same cluster of $S$.
Also, by Lemma~\ref{lem:truetwin} we know that all the vertices of $C_i$ belong to the same cluster of $S$.
Moreover, all vertices between different classes $I_i^{j}$,$I_i^{j'}$ are non-adjacent and are $C_i$-compatible.
Since every vertex of $I_i^{j}$ is non-adjacent to all the vertices of $V(G)\setminus \{I_i^{j} \cup C_i\}$,
we know that any cluster of $S$ that contains $I_i^{j}$ is of the form either $\{I_i^{j} \cup C_i\}$ or $I_i^{j}$.
Assume that there is a cluster that contains $\{I_i^{j} \cup C_i\}$ with $j\geq 2$.
Then, we substitute the vertices of $I_i^{j}$ by the vertices of $I_i^{1}$ and obtain a solution of at least the same size, because $|I_i^{1}|\geq |I_i^{j}|$ implies  ${{|C_i|+|I_i^{1}|}\choose 2} \geq {{|C_i|+|I_i^{j}|}\choose 2}$.
Thus, all vertices of each class $I_i^{j}$ with $j\geq 2$ constitute a cluster in an optimal solution $S$.

This means that we can safely remove the vertices of $I_i^{j}$ with $j\geq 2$, by constructing a cluster that contains only $I_i^{j}$.
Hence, we construct a graph $G^*$ from $G$, in which there are only matched pair of $k$ classes $(C_i,I_i)$ such that
(i) all sets $C_i,I_i$ are non-empty except possibly the set $I_k$,
(ii) $N(C_i)\cap I = I_i$, (iii) $N(I_i)=C_i$, (iv) $G^*[C_i \cup I_i]$ is a clique, and (v) $G^*[C_1 \cup \cdots \cup C_k]$ is a clique.
Our task is to solve {\sc Cluster Deletion} on $G^*$, since for the rest of the vertices we have determined their cluster.
By Lemma~\ref{lem:truetwin}, observe that if the vertices of $C_i \cup C_j$ belong to the same cluster then the vertices of each $I_i$ and $I_j$ constitute two respectively clusters. Thus, for each set of vertices $I_i$ we know that either one of $C_i \cup I_i$ or $I_i$ constitutes a cluster in $S$.
This boils down to compute a set $M$ of matched pairs $(C_i,I_i)$ from the $k$ classes, having the maximum value
$$
\sum_{(C_i,I_i)\in M} {|C_i|+|I_i|\choose 2} + {{\sum_{C_j \notin M} |C_j|} \choose 2} + \sum_{I_j \notin M}{|I_j| \choose 2}.
$$
Let $(C_i,I_i)$ and $(C_j,I_j)$ be two pairs of classes such that $|C_i|+|I_i| \leq |C_j|+|I_j|$.
We show that if $(C_j,I_j) \notin M$ then $(C_i,I_i) \notin M$.
Assume for contradiction that $(C_j,I_j) \notin M$ and $(C_i,I_i) \in M$.
Observe that $|I_j| < \sum_{C_t \notin M\setminus C_j} |C_t|$, because $I_j$ is $C_j$-compatible.
Similarly, we know that $\sum_{C_t \notin M\setminus C_j} |C_t| + |C_j| \leq |I_i|$.
This however, shows that $|C_j|+|I_j| < |I_i|$, contradicting the fact that $|C_i|+|I_i| \leq |C_j|+|I_j|$.
Thus $(C_j,I_j) \notin M$ implies $(C_i,I_i) \notin M$.

This means that we can consider the $k$ pair of classes $(C_i,I_i)$ in a decreasing order according to their number of vertices $|C_i|+|I_i|$.
With a simple dynamic programming algorithm, starting from the largest ordered pair $(C_1,I_1)$ we know that either $(C_1,I_1)$ belongs to $M$ or not.
In the former, we add ${|C_1|+|I_1|\choose 2}$ to the optimal value of $(C_2,I_2), \ldots, (C_k,I_k)$ and in the latter we know that no pair belongs to $M$
giving a total value of ${{\sum |C_i|} \choose 2} + \sum{|I_i| \choose 2}$.
By choosing the maximum between the two values, we construct a table of size $k$ needed for the dynamic programming.
Computing the twin classes and the partition $(C,I)$ takes linear time in the size of $G$ and sorting the pair of classes can be done $O(n)$ time, since $\sum (|C_i|+|I_i|)$ is bounded by $n$.
Thus, the total running time is $O(n+m)$, as the dynamic programming for computing $M$ requires $O(n)$ time.
Therefore, all steps can be carried out in linear time for a 1-split-twin graph $G$.
\end{proof}

\begin{definition}
A split-twin graph $G$ with partition $(C,I)$ on its vertices is called {\it threshold-twin} graph if the vertices of $I$ can be ordered $w_1, \ldots, w_{|I|}$ such that
for any $w_i,w_j \in I$ with $i <j$, we have $N_C(w_i) \subseteq N_C(w_j)$.
\end{definition}

\begin{theorem}\label{theo:onesplittwin}
{\sc Cluster Deletion} is polynomial-time solvable on threshold-twin graphs.
\end{theorem}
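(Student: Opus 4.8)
The plan is to follow the same template as the $1$-split-twin algorithm: use twins to pass to a clean weighted structure, prove a structural characterization of the optimal solutions, and then run a short dynamic program over that characterization. First I would reduce. By Lemma~\ref{lem:truetwin} every connected component of $G[I]$ is a clique of true twins that lies entirely in one cluster, so I contract each such component into a weighted vertex; components with no neighbour in $C$ become trivial clusters and are removed (adding a constant to the objective). Since $G$ is threshold-twin, the resulting twin classes have a nested chain of clique-neighbourhoods $N_C(T_1)\subseteq N_C(T_2)\subseteq\cdots$ in $C$. A short exchange argument (analogous to the one used in the $1$-split-twin case) shows that among twin classes sharing the same clique-neighbourhood at most one can lie in a non-trivial cluster of an optimal solution, so I keep only the largest of each such group as ``combinable'' and move the others into trivial clusters; the chain of the combinable twins $T_1,\dots,T_p$ is then strict. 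Finally, two vertices of $C$ with the same neighbourhood in $I$ are true twins of $G$, so I contract $C$ likewise into \emph{level blocks} $B_1,\dots,B_q$ with $q=p+1$, ordered along the chain and carrying weights, so that $B_\ell$ is adjacent to $T_i$ iff $\ell\le i$, every $B_\ell$ with $\ell\le p$ is nonempty, and $B_q$ consists of the vertices of $C$ with no neighbour in $I$.

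Next I would establish three facts about an optimal solution of the reduced instance. \textbf{(A) Consecutiveness:} the intersection of every cluster with $C$ is an interval $B_a,\dots,B_b$ of consecutive level blocks. Otherwise some cluster $X$ contains $B_{\ell_1},B_{\ell_2}$ while a block $B_k$ with $\ell_1<k<\ell_2$ lies in another cluster $Y$; since being addable to a clique is downward closed along the chain, both single moves — putting $B_k$ into $X$, and putting $B_{\ell_1}$ into $Y$ — are legal, so optimality forces each to be non-improving, i.e.\ $|X|\le|Y|-|B_k|$ and $|Y|\le|X|-|B_{\ell_1}|$, which sum to a contradiction. \textbf{(B) The pure cluster is a suffix:} merging two clusters that meet $C$ only strictly increases the number of internal edges, so there is at most one such ``pure'' cluster; and the same two-move exchange between its last block and the first block of the interval directly to its right shows it must be the rightmost interval, $\{B_{c+1},\dots,B_q\}$. \textbf{(C) Paired clusters end at their twin:} if a cluster is $X=\{B_a,\dots,B_b\}\cup T_i$ with $b<i$, then $B_{b+1}$ is nonempty and $X$-compatible, and $N[B_{b+1}]=C\cup T_{b+1}\cup\cdots\cup T_p\subseteq C\cup T_a\cup\cdots\cup T_p=N[v]$ for any $v\in B_a$, so Corollary~\ref{cor:inclusion} forces $B_{b+1}\subseteq X$, a contradiction; hence $b=i$.

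Combining (A)--(C), an optimal solution is described by an increasing sequence $1\le i_1<\cdots<i_k\le p$: its non-trivial clusters are $\{B_{i_{j-1}+1},\dots,B_{i_j}\}\cup T_{i_j}$ (with $i_0:=0$) for $j=1,\dots,k$, together with the pure cluster $\{B_{i_k+1},\dots,B_q\}$, and every twin $T_i$ with $i\notin\{i_1,\dots,i_k\}$ is a trivial cluster. Scanning from the right I would set $D[q+1]=0$ and, for $\ell=q,\dots,1$,
\[
D[\ell]=\max\Bigl(\;|B_\ell\cup\cdots\cup B_q|_2+\sum_{t=\ell}^{p}|T_t|_2\;,\;\;\max_{\ell\le i\le p}\bigl(|B_\ell\cup\cdots\cup B_i\cup T_i|_2+\sum_{t=\ell}^{i-1}|T_t|_2+D[i+1]\bigr)\Bigr),
\]
so that the optimum equals $D[1]$ plus the constant collected in the reductions, and back-pointers recover the clustering. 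With prefix sums this runs in $O(n^2)$ time, and the reductions take near-linear time.

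I expect fact~(C) to be the crux. Without it, the assignment of twin classes to block-intervals behaves like a weighted bipartite matching — a rearrangement can profitably place a large twin on a left-hand interval, so the choices do not obviously collapse — and applying Corollary~\ref{cor:inclusion} to the \emph{first} block of a paired cluster is precisely what nails every twin to the right end of its interval and reduces the whole problem to the one-dimensional recurrence above. The remaining delicate points are fact~(A) and the bookkeeping that the twin/block contractions and the removal of equal-neighbourhood twin classes genuinely preserve the optimum.
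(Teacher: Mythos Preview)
Your argument appears to be correct, but it is far more elaborate than what the paper does. The paper's proof is essentially a one-liner: it shows directly that every threshold-twin graph is $P_4$-free. Indeed, in a putative induced $P_4=v_1v_2v_3v_4$ one endpoint, say $v_1$, must lie in $I$; if $v_4\in C$ then $v_2\in I$ (since $v_2v_4\notin E$), contradicting that adjacent vertices of $I$ are true twins; if $v_4\in I$ then $v_2,v_3\in C$ and the non-edges $v_1v_3,v_2v_4$ violate the nesting $N_C(v_1)\subseteq N_C(v_4)$ or $N_C(v_4)\subseteq N_C(v_1)$. Hence threshold-twin graphs are cographs, and the result follows immediately from the known polynomial-time algorithm for {\sc Cluster Deletion} on $P_4$-free graphs \cite{CD-cographs}.

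By contrast, you build the structure of an optimal solution from scratch: you contract twin blocks on both sides, prove consecutiveness of the $C$-part of each cluster, show the pure cluster is a suffix, and pin each paired twin to the right end of its block-interval via Corollary~\ref{cor:inclusion}, arriving at an explicit $O(n^2)$ dynamic program. These steps check out (the two-move exchange in (A)--(B) is a clean weighted analogue of Lemma~\ref{lem:compatible}, and your use of Corollary~\ref{cor:inclusion} in (C) is legitimate once one applies it to single representatives in the uncontracted graph). What your approach buys is a self-contained, constructive description of the optimal clustering and an explicit running time, without appealing to the cograph machinery. What the paper's approach buys is brevity: the nested-neighbourhood hypothesis is exactly what kills the $P_4$, so the entire theorem reduces to a two-sentence observation plus a citation. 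You may want to note that your structural facts (A)--(C) are, in hindsight, consequences of $P_4$-freeness combined with the greedy max-clique characterization of optimal solutions on cographs from \cite{CD-cographs}.
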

\begin{proof}
Let $G$ be a threshold-twin graph with partition $(C,I)$.
We show that there is no induced path on four vertices, $P_4$, in $G$.
Assume for contradiction that there is a $P_4=v_1v_2v_3v_4$ in $G$.
Since $G[C]$ is a clique and $G[I]$ is a disjoint union of cliques, at least one of $v_1,v_4$, say $v_1$, belongs to $I$.
If $v_4\in C$ then $v_2 \in I$ because $v_4v_2 \notin E(G)$, which gives a contradiction as $v_1v_2\in E(G)$ and $v_1,v_2$ are not true twins.
Otherwise, we have $v_4\in I$, so that $v_2,v_3 \in C$ because $v_1,v_2$ and $v_3,v_4$ are not true twins $G$.
The latter, results again in a contradiction because $N(v_1) \cap C \nsubseteq N(v_4) \cap C$ and $N(v_4) \cap C \nsubseteq N(v_1) \cap C$. Thus, $G$ is a $P_4$-free graph.
Therefore, by the polynomial-time algorithm for {\sc Cluster Deletion} on $P_4$-free graphs \cite{CD-cographs}, we obtain a solution for {\sc Cluster Deletion} on $G$.
\end{proof}

\section{Concluding remarks}
It is notable that our algorithm for interval graphs, heavily relies on the linear structure obtained from their clique paths. 
Such an observation, leads us to consider few open questions regarding two main directions. 
On the one hand, it seems tempting to adjust our algorithm for other vertex partitioning problems on interval graphs within a more general framework, as already have been studied for particular graph properties \cite{Bui-XuanTV13,GERBER2003719,HLNPT10,KanjKSL18,TelleP97}.
On the other hand, it is reasonable to ask whether our approach works for {\sc Cluster Deletion} on graphs admitting similar linear structure such as permutation graphs, or graphs having bounded linear related parameter.
Towards the latter direction, observe that {\sc Cluster Deletion} as a vertex partitioning problem seems to be expressible in monadic second order logic of second type with quantification over vertex sets and edge sets. Therefore, {\sc Cluster Deletion} can be solved in linear time on graphs of
bounded treewidth by using Courcelle's machinery \cite{Courcelle90}.

Although for other structural parameters it seems rather difficult to obtain a similar result,
it is still interesting to settle the complexity of {\sc Cluster Deletion} on distance hereditary graphs that admit constant clique-width \cite{GolumbicR00}.
In fact, we would like to settle the case in which from a given cograph ($P_4$-free graph) we can append degree-one vertices.
This comes in conjunction with the 1-split-twin graphs, as they can be seen as a degree-one extension of a clique.

\bibliography{CD_classes}

\begin{thebibliography}{10}

\bibitem{BBC04}
N.~Bansal, A.~Blum, and S.~Chawla.
\newblock Correlation clustering.
\newblock {\em Machine Learning}, 56:89--113, 2004.

\bibitem{BonomoDNV15}
F.~Bonomo, G.~Dur{\'{a}}n, A.~Napoli, and M.~Valencia{-}Pabon.
\newblock A one-to-one correspondence between potential solutions of the
  cluster deletion problem and the minimum sum coloring problem, and its
  application to {$P_4$}{-}sparse graphs.
\newblock {\em Inf. Proc. Lett.}, 115:600--603, 2015.

\bibitem{BDM15}
F.~Bonomo, G.~Dur{\'{a}}n, and M.~Valencia{-}Pabon.
\newblock Complexity of the cluster deletion problem on subclasses of chordal
  graphs.
\newblock {\em Theor. Comp. Science}, 600:59--69, 2015.

\bibitem{BraLeSpi99}
A.~Brandst\"{a}dt, V.~B. Le, and J.~Spinrad.
\newblock {\em Graph Classes: A Survey}.
\newblock Society for Industrial and Applied Mathematics, 1999.

\bibitem{Bui-XuanTV13}
B.~Bui{-}Xuan, J.~A. Telle, and M.~Vatshelle.
\newblock Fast dynamic programming for locally checkable vertex subset and
  vertex partitioning problems.
\newblock {\em Theor. Comput. Sci.}, 511:66--76, 2013.

\bibitem{CGW03}
M.~Charikar, V.~Guruswami, and A.~Wirth.
\newblock Clustering with qualitative information.
\newblock In {\em Proceedings of FOCS 2003}, pages 524--533, 2003.

\bibitem{Courcelle90}
B.~Courcelle.
\newblock The monadic second-order logic of graphs i: Recognizable sets of
  finite graphs.
\newblock {\em Information and Computation}, 85:12--75, 1990.

\bibitem{DessmarkJLLP07}
A.~Dessmark, J.~Jansson, A.~Lingas, E.{-}M. Lundell, and M.~Persson.
\newblock On the approximability of maximum and minimum edge clique partition
  problems.
\newblock {\em Int. J. Found. Comput. Sci.}, 18:217--226, 2007.

\bibitem{FH77}
S.~F{\"o}ldes and P.~L. Hammer.
\newblock Split graphs.
\newblock {\em Congressus Numerantium}, 19:311--315, 1977.

\bibitem{FG65}
D.~R. Fulkerson and O.~A. Gross.
\newblock Incidence matrices and interval graphs.
\newblock {\em Pacific Journal of Mathematics}, 15:835--855, 1965.

\bibitem{CD-cographs}
Y.~Gao, D.~R. Hare, and J.~Nastos.
\newblock The cluster deletion problem for cographs.
\newblock {\em Discrete Mathematics}, 313:2763--2771, 2013.

\bibitem{GERBER2003719}
M.~U. Gerber and D.~Kobler.
\newblock Algorithms for vertex-partitioning problems on graphs with fixed
  clique-width.
\newblock {\em Theor. Comp. Science}, 299:719 -- 734, 2003.

\bibitem{GolovachHKLP18}
P.~A. Golovach, P.~Heggernes, A.~L. Konstantinidis, P.~T. Lima, and
  C.~Papadopoulos.
\newblock Parameterized aspects of strong subgraph closure.
\newblock In {\em Proceedings of SWAT 2018}, pages 23:1--23:13, 2018.

\bibitem{GolumbicR00}
M.~C. Golumbic and U.~Rotics.
\newblock On the clique-width of some perfect graph classes.
\newblock {\em Int. J. Found. Comput. Sci.}, 11:423--443, 2000.

\bibitem{GK18}
N.~Gr{\"{u}}ttemeier and C.~Komusiewicz.
\newblock On the relation of strong triadic closure and cluster deletion.
\newblock In {\em Proceedings of WG 2018}, pages 239--251, 2018.

\bibitem{HS81}
P.~L. Hammer and B.~Simeone.
\newblock The splittance of a graph.
\newblock {\em Combinatorica}, 1:275--284, 1981.

\bibitem{HJ97}
P.~Hansen and B.~Jaumard.
\newblock Cluster analysis and mathematical programming.
\newblock {\em Math. Programming}, 79:191--215, 1997.

\bibitem{Hartigan}
J.~Hartigan.
\newblock {\em Clustering Algorithms}.
\newblock Wiley, New York, 1975.

\bibitem{HLNPT10}
P.~Heggernes, D.~Lokshtanov, J.~Nederlof, C.~Paul, and J.~A. Telle.
\newblock Generalized graph clustering: recognizing $(p,q)$-cluster graphs.
\newblock In {\em Proceedings of WG 2010}, pages 171--183, 2010.

\bibitem{KanjKSL18}
I.~A. Kanj, C.~Komusiewicz, M.~Sorge, and E.~Jan van Leeuwen.
\newblock Solving partition problems almost always requires pushing many
  vertices around.
\newblock In {\em Proceedings of ESA 2018}, pages 51:1--51:14, 2018.

\bibitem{CD-lowdegree}
C.~Komusiewicz and J.~Uhlmann.
\newblock Cluster editing with locally bounded modifications.
\newblock {\em Discrete Applied Mathematics}, 160:2259--2270, 2012.

\bibitem{Schaeffer07}
S.~E. Schaeffer.
\newblock Graph clustering.
\newblock {\em Computer Science Review}, 1(1):27--64, 2007.

\bibitem{CluNP04}
R.~Shamir, R.~Sharan, and D.~Tsur.
\newblock Cluster graph modification problems.
\newblock {\em Discrete Applied Mathematics}, 144:173--182, 2004.

\bibitem{TelleP97}
J.~A. Telle and A.~Proskurowski.
\newblock Algorithms for vertex partitioning problems on partial
  \emph{k}-trees.
\newblock {\em {SIAM} J. Discrete Math.}, 10:529--550, 1997.

\end{thebibliography}

\end{document}